\newtheorem{theorem}{Theorem}[section]
\newtheorem{corollary}[theorem]{Corollary}
\newtheorem{lemma}[theorem]{Lemma}
\newtheorem{proposition}[theorem]{Proposition}
\theoremstyle{definition}
\theoremstyle{remark} \theoremstyle{remark}
\newtheorem{remark}[theorem]{Remark}
\numberwithin{equation}{section}
\title[A Phase Transition in a Widom-Rowlinson Model]{A Phase Transition in a Widom-Rowlinson Model with Curie-Weiss Interaction}
\author{ Yuri  Kozitsky}
\address{Instytut Matematyki, Uniwersytet Marii Curie-Sk{\l}odowskiej, 20-031 Lublin, Poland}
\email{jkozi@hektor.umcs.lublin.pl}
\author{Mykhailo Kozlovskii}
\address{Institute for Condensed Matter Physics/ Lviv, Ukraine}
\email{mpk@icmp.lviv.ua}
\begin{document}

\begin{abstract}

An analog of the continuum Widom-Rowlinson model is introduced and
studied. Its two-component version is a gas of point particles of
types 0 and 1 placed in $\mathds{R}^d$, in which like particles do
not interact and unlike particles contained in a vessel of volume
$V$ repel each other with intensity $a/V$. The one-component version
is a gas of particles with multi-particle interactions of
Curie-Weiss type. Its thermodynamic behavior is obtained by
integrating out the coordinates of one of the components of the
two-component version. In the grand canonical setting, a rigorous
theory of phase transitions in this model is developed and
discussed. In particular, for both versions thermodynamic phases and
phase diagrams are explicitly constructed  and the equations of
state are obtained and analyzed.

\end{abstract}

\subjclass{82B21; 82B26}%

\keywords{thermodynamic phase, liquid-vapor phase transition, order
parameter, symmetry breaking}

\maketitle

\section{Introduction}

The rigorous theory of thermal equilibrium of continuum particle
systems has got much more modest results than its counterpart
dealing with lattices, graphs, etc. There exist only few `realistic'
models in which the existence of a liquid-vapor phase transition was
mathematically proved. Among them there is the model introduced in
\cite{WR} by B. Widom and J. S. Rowlinson in which the potential
energy of $n$ point particles located at $x_1, \dots, x_n \in
\mathds{R}^d$ is set to be $\theta[ W(x_1 , \dots , x_n) - n]$,
where $\theta>0$ is a parameter and $W$ is the volume of the area
$\cup_{i=1}^n B(x_i)$ covered by the balls of unite volume centered
at these particles. The thermodynamics of this model is in a sense
equivalent to that of a two-component system with binary
interactions in which the interaction between unlike particles is a
hard-core repulsion and is zero otherwise. In \cite{R}, D. Ruelle
proved that the two-component system in two or more dimensions
undergoes a phase transition of first order. Later on, the rigorous
theory of this model was extended in \cite{Kot}, see also \cite{GHM}
for a review. However, these results give a little for understanding
the details of the phenomenon. No rigorous results are available on
the behavior at the phase-transition threshold. The very existence
of such a threshold remains unknown. At the same time, for a number
of lattice models the mean field approach allows for understanding
phase transitions in the corresponding models with `realistic'
interactions, see \cite{Biskup}. It is then quite natural to develop
the mean field theory of phase transitions also in continuum
systems. For the Widom-Rowlinson model, the first attempt to do this
was undertaken already in \cite[Sect. VII]{WR}. Assuming that the
particles are distributed in a given vessel ``at random" the authors
heuristically deduced an equation of state \cite[eq. (7.4)]{WR},
which manifests a first order phase transition. One of the ways to
develop a mean field theory in a rigorous way is to use Curie-Weiss
interaction potentials, see \cite{Zag} and \cite[Sect. IV.4]{EN}.
The aim of this work is to perform a rigorous study of this kind of
an analog of the Widom-Rowlinson model with Curie-Weiss
interactions, which we introduce in Section \ref{1S} below.
Similarly to the original Widom-Rowlinson model, it has two
versions: (a) a two-component gas of point particles with binary
repulsion of unlike particles; (b) a one-component gas with
multi-particle interactions the states of which are obtained by
`integrating out' the coordinates of one of the components of the
two-component system. The phase diagrams and the thermodynamic
phases of these versions are described in Theorems \ref{1tm} and
\ref{2tm}, respectively.  Unlike to \cite{Zag,LP} we work in the
grand canonical ensemble approach and -- along with a traditional
purely thermodynamic description -- we explicitly construct
thermodynamic phases and show their multiplicity occurring for
certain values of the particle activities. The formulation of the
results is followed by their detailed discussion in the same Section
\ref{1S}. The validity of Theorem \ref{2tm} directly follows from
Theorem \ref{1tm}.  The proof of Theorem \ref{1tm} is performed in
Section \ref{2S}. As is usual for Curie-Weiss interactions, the
thermodynamic limit is achieved by calculating asymptotics of
certain integrals, cf \cite[Theorem IV.4.1]{EN}. Unlike to lattice
system where this is mostly done by directly applying Laplace's
method, here we have to overcome technical difficulties related to a
more complex dependence of the integrands on the `large parameter'.

\section{The Setup }
\label{1S}

In this work, $\mathds{N}$ and $\mathds{R}$ will stand for the sets
of natural and real numbers, respectively; also $\mathds{N}_0 :=
\mathds{N}\cup \{0\}$. For $d \in \mathds{N}$, by $\mathds{R}^d$ we
denote the Euclidean space of vectors $x=(x^1, \dots, x^d)$, $x^i\in
\mathds{R}$, equipped with the usual Lebesgue measure $dx$.

\subsection{The model}

States of thermal equilibrium of infinite systems of point particles
in $\mathds{R}^d$ are described as probability measures defined on
the space of locally finite configurations $\Gamma=\{ \gamma \subset
\mathds{R}^d: |\gamma\cap \Lambda|<\infty\}$, where $\Lambda$ is a
\emph{vessel} -- a bounded closed subset of $\mathds{R}^d$, and
$|\gamma\cap \Lambda|$ stands for the number of particles in the
intersection of $\gamma$ with $\Lambda$. If the particles do not
interact, the corresponding state is a Poisson measure $P_z$,
characterized by \emph{activity} $z=e^\mu$. The dimensionless
parameter $\mu\in \mathds{R}$ is supposed to include the reciprocal
temperature $\beta$. For a vessel $\Lambda$ of volume $V$ and $n\in
\mathds{N}_0$, let $\Gamma_{\Lambda,n}$ be the set of all
configurations $\gamma$ such that $|\gamma\cap \Lambda|=n$. Then
$P_z$  is completely characterized by its values on all such sets
$\Gamma_{\Lambda,n}$, given by the formula
\begin{equation}
  \label{1}
 P_z (\Gamma_{\Lambda , n}) = \frac{\left( z V\right)^n}{n!} \exp
 \left( - z V \right).
\end{equation}
In the probabilistic interpretation,  $P_z$ assigns the probability
given in the right-hand side of (\ref{1}) to the event: $\Lambda$
contains $n$ particles. Assume now that point particles of two
types, 0 and 1, are placed in the same space $\mathds{R}^d$. If they
do not interact, their state of thermal equilibrium is the Poisson
measure $P_{z_0,z_1} = P_{z_0}\otimes P_{z_1}$, according to which
the event $\Gamma_{\Lambda,n_0} \times \Gamma_{\Lambda , n_1}$:
$\Lambda$ contains $n_0$ particles of type 0 and $n_1$ particles of
type 1, has the probability
\begin{equation}
  \label{2}
P_{z_0,z_1} \left(\Gamma_{\Lambda,n_0} \times \Gamma_{\Lambda , n_1}
\right) = P_{z_0}\left(\Gamma_{\Lambda,n_0} \right) \cdot
P_{z_1}\left(\Gamma_{\Lambda,n_1} \right),
\end{equation}
where $P_{z_i}\left(\Gamma_{\Lambda,n_i} \right)$, $i=0,1$, are as
in (\ref{1}).

For interacting particles, phases are constructed as limits $\Lambda
\to \mathds{R}^d$ of local Gibbs measures $P_{z}^{\Lambda,\Phi}$
($P_{z_0,z_1}^{\Lambda,\Phi}$ for two-component systems) describing
the portion of the particles contained in the vessel $\Lambda$ and
interacting with each other with energy $\Phi$, see, e.g.,
\cite{GHM,Ruelle,GeM}. In this work, we introduce two models that --
like the Widom-Rowlinson model -- can be considered as two versions
of the same model. The first one is a two-component gas of point
particles in $\mathds{R}^d$. For a vessel $\Lambda \subset
\mathds{R}^d$ of volume $V$, unlike particles contained in $\Lambda$
repel each other with intensity $a/V>0$, whereas like particles do
not interact. Thus, the potential energy of the collection of $n_0$
particles of type 0 located at $x_1^{0}, \dots, x_{n_0}^{0} \in
\Lambda$ and of $n_1$ particles of type 1 located at $x_1^{1},
\dots, x_{n_1}^{1} \in \Lambda$ is
\begin{equation}
  \label{U1}
 \Phi_\Lambda (x_1^{0}, \dots, x_{n_0}^{0}; x_1^{1}, \dots, x_{n_1}^{1}) =
 \sum_{i=1}^{n_0} \sum_{j=1}^{n_1}\frac{a}{V} = \frac{a}{V}n_0 n_1, \qquad n_0 , n_1 \in \mathds{N}_0.
\end{equation}
The grand canonical partition function of this collection then is
\begin{eqnarray}
  \label{U2}
& & \Xi_\Lambda (a, \mu_0 , \mu_1)\\[.2cm] \nonumber &  = & \sum_{n_0, n_1=0}^\infty
 \frac{1}{n_0! n_1 !} \int_{\Lambda^{n_0}}
\int_{\Lambda^{n_1}} \exp\left( \mu_0 n_0 + \mu_1 n_1 -
\frac{a}{V}n_0 n_1\right)d x_1^{0} \cdots d
x_{n_0}^{0} d x_1^{1} \cdots d x_{n_1}^{1}  \\[.2cm]\nonumber &= & \sum_{n_0, n_1=0}^\infty
 \frac{V^{n_0 + n_1}}{n_0! n_1 !}  \exp\left(
\mu_0 n_0 + \mu_1 n_1 - \frac{a}{V}n_0 n_1\right).
\end{eqnarray}
Here the interaction parameter $a>0$ and the chemical potentials
$\mu_i\in \mathds{R}$, $i=1,2$, include the reciprocal temperature
$\beta$ and thus are dimensionless. The second our model is a
one-component system of point particles interacting as follows. For
a vessel $\Lambda$ of volume $V$, the potential energy of the
collection of $n$ particles located at $x_0, \dots , x_n\in \Lambda$
is set to be
\begin{equation}
  \label{2a}
\widehat{\Phi}_\Lambda (x_1 , \dots , x_n) = V \theta \left[ 1 -
\exp\left( - \frac{a}{V}
 n\right)\right], \qquad n\in \mathds{N}_0.
\end{equation}
Here $\theta>0$ is a parameter, similar to that in \cite{WR}
mentioned above. Then the corresponding grand canonical partition
function is
\begin{eqnarray}
  \label{U3}
 \widehat{\Xi}_\Lambda (a, \mu , \theta) &  = & \sum_{n=0}^\infty
 \frac{1}{n!} \int_{\Lambda^{n}} \exp\left( \mu n - V \theta \left[ 1 -
\exp\left( - \frac{a}{V}
 n\right)\right]
\right)d x_1 \cdots d
x_{n} \\[.2cm]\nonumber &= & \sum_{n=0}^\infty
 \frac{V^{n}}{n!}  \exp\left(
\mu n  - V \theta \left[ 1 - \exp\left( - \frac{a}{V}
 n\right)\right] \right) \\[.2cm]\nonumber &= & \exp\left(
- V\theta\right) \Xi_{\Lambda}(a,\mu,\ln \theta).
\end{eqnarray}
The latter equality can readily be derived by summing out in
(\ref{U2}) over $n_1$. The dependence of the pressure $p$ in the
two-component system (resp. $\widehat{p}$ in the one-component
system) on $a$ and $\mu_i$, $i=0,1$ (resp. on $a$, $\mu$ and
$\theta$) is then obtained in the thermodynamic limit
\begin{eqnarray}
  \label{4}
  p = p(a,\mu_0, \mu_1)& = & \lim_{V\to +\infty} \frac{1}{V} \ln
  \Xi_{\Lambda}(a,\mu_0,\mu_1),\\[.2cm] \nonumber
  \widehat{p} =  \widehat{p}(a,  \theta, \mu)& = & \lim_{V\to +\infty} \frac{1}{V} \ln
   \widehat{\Xi}_{\Lambda}(a,\theta,\mu),
\end{eqnarray}
which by the last line in (\ref{U3}) yields $\widehat{p}= p -
\theta$. Thus, the particle density $\varrho$ in the one-component
system and the density $\varrho_0$ of the particles of type 0 in the
two-component system are related to each other by
\begin{equation}
  \label{4x}
\varrho = \frac{\partial \widehat{p}}{\partial \mu} = \frac{\partial
p}{\partial \mu_0}\bigg{|}_{\mu_0=\mu, \ \mu_1 = \ln \theta} =
\varrho_0\bigg{|}_{\mu_0=\mu, \ \mu_1 = \ln \theta}.
\end{equation}

\subsection{The results}

In the sequel, the two-component model defined in (\ref{U1}) and
(\ref{U2}) is considered as the main object of the study, and the
description of the one-component model is then based on the use of
(\ref{U3}) and (\ref{4x}).

\subsubsection{The two-component model}
According to (\ref{U2}) the two-component model is characterized by
three thermodynamic variables: $a, \mu_0, \mu_1$. Hence the
corresponding phase space is
\begin{equation}
  \label{U4}
\mathcal{F}= \{ (a, \mu_0, \mu_1): a \geq 0, \mu_0 , \mu_1 \in
\mathds{R}\}.
\end{equation}
We then define its subsets
\begin{gather}
  \label{U5}
 \mathcal{M}  =  \{ (a, \mu, \mu) : a>0, \mu > 1 - \ln a\},
 \\[.2cm] \nonumber  \mathcal{C}  =  \{ (a, 1-\ln a, 1 - \ln a) :
 a>0\}, \quad \mathcal{R}= \mathcal{F}\setminus (\mathcal{C}\cup
\mathcal{M}).
\end{gather}
Their meaning -- which will be seen below -- is as follows:
$\mathcal{M}$ is the set of phase coexistence points,  $\mathcal{C}$
is the line of the critical points and $\mathcal{R}$ is the
single-phase domain. The division itself is called the \emph{phase
diagram} of the model. It turns out that it is related to the maxima
of the function
\begin{equation}
  \label{10}
  E(y) = f(a,\mu_0 +y) + f(a,\mu_1 - y) - \frac{y^2}{2a}, \qquad y
  \in \mathds{R},
\end{equation}
with $a$, $\mu_1$ and $\mu_2$ considered as parameters. Here
\begin{eqnarray}
  \label{11}
  f(a,x) = \frac{a}{2} \left[ u(a,x)\right]^2 + u(a,x) , \qquad x\in
  \mathds{R},
\end{eqnarray}
whereas $u $ is a special function that can be expressed through
Lambert's $W$-function \cite{W} as follows
\begin{equation}
  \label{8}
  u(a,x) = \frac{1}{a} W ( a e^x).
\end{equation}
For a fixed $a>0$, the function $\mathds{R}\ni x \mapsto u(a, x)$
can be obtained as the inverse to
\begin{equation}
  \label{8b}
(0, +\infty)\ni u \mapsto x(u) = a u + \ln u,
\end{equation}
by which one gets that
\begin{gather}
  \label{8a}
a u(a,x) \exp\left[a u(a,x)  \right] = a e^x,\\[.2cm] \nonumber
  u' (a, x):= \frac{\partial }{\partial x} u(a, x)  =  \frac{ u(a, x)}{1 + a u(a,
  x)}.
\end{gather}
The relationship between (\ref{10}) and (\ref{U5}) is established in
the following statement, proved in Sect. 3.1 and illustrated in Fig
\ref{F1} below.
\begin{proposition}
  \label{0lm}
The function $E$ is infinitely differentiable on $\mathds{R}$ and
each of its global maxima is also a local maximum. Hence, it
satisfies the equation
\begin{equation}
  \label{12}
 y = w(y):= a u(a, \mu_0 + y ) - a u(a, \mu_1-y), \qquad y\in \mathds{R}.
\end{equation}
Moreover, the sets defined in (\ref{U4}) and (\ref{U5}) have the
following properties:
\begin{itemize}
  \item[{\it (a)}] For each $(a,\mu_0, \mu_1)\in \mathcal{R}$ such that $\mu_0\geq \mu_1$ (resp. $\mu_0\leq \mu_1$), $E$
  has a unique non-degenerate global maximum at some $y_*\geq 0$
  (resp. $y_*\leq 0$). For each $a>0$ and $\mu_0 = \mu_1 = 1 -\ln
  a$, $E$
  has a unique degenerate global maximum at $y=0$.
  \item[{\it (b)}] For each $a>0$ and $\mu_0 = \mu_1 =\mu > 1 -\ln
  a$, $E$ has two equal maxima at $\pm\bar{y}(a,\mu)$ where
  $\bar{y}(a,\mu)>0$ is a unique solution of the equation
\begin{equation}
\label{6b} \psi (y ) := y + \frac{y}{e^y -1} - 1 + \ln \frac{y}{e^y
-1} = \mu - (1-\ln a) , \quad y>0.
\end{equation}
\end{itemize}
\end{proposition}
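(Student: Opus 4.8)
The plan is to reduce the whole statement to an analysis of the critical points of $E$ (for $a>0$) together with the sign of $E''$ there, using two elementary identities. Differentiating \eqref{11} and invoking the second line of \eqref{8a} gives $\partial_x f(a,x)=(a u(a,x)+1)u'(a,x)=u(a,x)$, whence, with the shorthand $v_0(y):=a u(a,\mu_0+y)$, $v_1(y):=a u(a,\mu_1-y)$ and \eqref{8a} again in the form $a u'=v/(1+v)$,
\begin{equation*}
E'(y)=u(a,\mu_0+y)-u(a,\mu_1-y)-\frac{y}{a},\qquad
E''(y)=\frac1a\cdot\frac{v_0(y)\,v_1(y)-1}{\bigl(1+v_0(y)\bigr)\bigl(1+v_1(y)\bigr)}.
\end{equation*}
Infinite differentiability of $E$ follows since $W$, hence $u(a,\cdot)$, is analytic on $(0,\infty)$. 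Rearranging $a u+\ln u=x$ yields $f(a,x)-x^2/2a=u(a,x)\bigl(1-\ln u(a,x)\bigr)-\bigl(\ln u(a,x)\bigr)^2/2a$, and together with $u(a,x)\to 0$ as $x\to-\infty$ and $u(a,x)\to\infty$ as $x\to+\infty$ this shows $E(y)\to-\infty$ as $y\to\pm\infty$. Hence $E$ has a global maximum at an interior point $y_*$, where $E'(y_*)=0$, i.e. $y_*=w(y_*)$; this gives the first assertion.

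Next I would put $\delta:=\mu_0-\mu_1$, $y_0:=-\delta/2$, and study $h(y):=v_0(y)v_1(y)$, which controls the sign of $E''$. A short computation gives $\frac{d}{dy}\ln h(y)=\frac{1}{1+v_0(y)}-\frac{1}{1+v_1(y)}$, of the sign of $y_0-y$ (since $u(a,\cdot)$ is strictly increasing and $v_0(y)=v_1(y)\iff\mu_0+y=\mu_1-y$); so $h$ increases on $(-\infty,y_0]$, decreases on $[y_0,\infty)$, tends to $0$ at $\pm\infty$, and attains its maximum $h(y_0)=\bigl[a u(a,(\mu_0+\mu_1)/2)\bigr]^2$, which by \eqref{8b} is $<1$, $=1$ or $>1$ according as $(\mu_0+\mu_1)/2<1-\ln a$, $=1-\ln a$, or $>1-\ln a$. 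In the first two cases $E''\le0$ on $\mathds R$, so $E$ is concave with a strictly decreasing $E'$; thus $E$ has a single critical point, the unique global maximum, which is non-degenerate unless it lies at $y_0$ and $\delta=0$ as well — that being exactly the point of $\mathcal C$, excluded from $\mathcal R$ but covered by the last clause of (a) (there $v_0=v_1=1$, so $E''(0)=0$, $y_*=0$). It remains to handle the \emph{double-well} regime $(\mu_0+\mu_1)/2>1-\ln a$, in which $\{h\ge1\}=[y_1,y_2]$ is a proper interval and $E$ is strictly concave on $(-\infty,y_1]$ and on $[y_2,\infty)$, strictly convex on $[y_1,y_2]$.

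The key step in that regime — which I expect to be the main obstacle — is that along the set of critical points $E$ is a convex function of a single variable. Writing $p:=u(a,\mu_0+y)$, $q:=u(a,\mu_1-y)$, so that $ap+\ln p=\mu_0+y$ and $aq+\ln q=\mu_1-y$, the equation $E'(y)=0$ becomes $y=a(p-q)$; substituting this into $E(y)=\tfrac a2(p^2+q^2)+p+q-y^2/2a$ collapses it to $E(y)=apq+p+q$, while adding the defining relations gives $a(p+q)+\ln(pq)=\mu_0+\mu_1$. Hence, with $S:=p+q$, one has $pq=e^{\mu_0+\mu_1-aS}$ and $E(y)=\mathcal E(S):=S+a\,e^{\mu_0+\mu_1-aS}$, where $\mathcal E''>0$ and $\mathcal E'(S)=1-a^2pq=1-v_0v_1$. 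A local maximum of $E$ has $E''<0$, hence $v_0v_1<1$, so $S$ lies to the right of the unique minimizer of the convex function $\mathcal E$ (a critical point with $v_0v_1=1$, which must sit at $y_1$ or $y_2$, is checked not to be a local maximum), where $\mathcal E$ is strictly increasing. Therefore two distinct global maxima of $E$ would have the same $S$, hence the same $pq$, hence the same unordered pair $\{p,q\}$, hence interchanged $(p,q)$; but the relation $\ln(p/q)=\delta+a(p-q)$ obtained by subtracting the defining relations is compatible with such an interchange only if $\delta=0$. Thus the global maximum is unique whenever $\mu_0\ne\mu_1$, which holds for all $\mathcal R$-points in this regime; non-degeneracy follows as above; and since for $y<0$ with $\mu_0>\mu_1$
\begin{equation*}
E(-y)-E(y)=\int_{-|y|}^{|y|}\bigl(u(a,\mu_0+t)-u(a,\mu_1+t)\bigr)\,dt>0,
\end{equation*}
the maximizer satisfies $y_*\ge0$ (and $y_*\le0$ when $\mu_0\le\mu_1$). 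This proves (a).

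For (b), with $\mu_0=\mu_1=\mu>1-\ln a$, the function $E$ is even and $E''(0)=\tfrac1a\bigl([a u(a,\mu)]^2-1\bigr)/(1+a u(a,\mu))^2>0$, so $0$ is a local minimum and the global maximum is off the origin. At a positive critical point the relations above with $\delta=0$ give $q=y/\bigl(a(e^y-1)\bigr)$, $p=qe^y$; inserting this into $aq+\ln q=\mu-y$ produces precisely \eqref{6b}. I would then show that $\psi$ is strictly increasing on $(0,\infty)$: a computation gives $\psi'(y)=\bigl((e^y-1)^2-y^2e^y\bigr)/\bigl(y(e^y-1)^2\bigr)$, which is positive because $(e^y-1)^2-y^2e^y=e^y\bigl((2\sinh(y/2))^2-y^2\bigr)>0$; moreover $\psi(0^+)=0$ and $\psi(y)\to\infty$. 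Hence \eqref{6b} has a unique solution $\bar y(a,\mu)>0$, which together with $0$ and (by evenness) $-\bar y$ exhausts the critical set; since $E\to-\infty$ and $0$ is a local minimum, $\pm\bar y$ are the two global maxima, equal by evenness, and non-degenerate because there $v_0v_1=\bar y^2e^{\bar y}/(e^{\bar y}-1)^2<1$ by the same inequality. The hardest point, as noted, is the convexity argument of the preceding paragraph — recognising that on the critical manifold $E$ depends on $(p,q)$ only through the convex function $S\mapsto\mathcal E(S)$ — since this is exactly what rules out a second global maximum as soon as the symmetry $\mu_0=\mu_1$ is broken.
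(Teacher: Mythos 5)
Your proof is correct, but the decisive step is handled by a genuinely different mechanism than the paper's. Both arguments share the same skeleton: the formulas for $E'$ and $E''$, the coercivity $E(y)\to-\infty$, the reduction of global maxima to solutions of \eqref{12}, and the derivation of \eqref{6b} via $p=qe^{y}$, $q=y/(a(e^y-1))$. Where you diverge is in proving uniqueness and non-degeneracy of the global maximum for $\mu_0\neq\mu_1$ in the regime where $E$ has two local maxima. The paper does this perturbatively: it sets $\mu_0=\mu+\delta$, tracks the three critical points as continuous functions of $\delta$ via the implicit function theorem (see \eqref{E7}--\eqref{E8}), and compares the values of $E$ at the two competing maxima by integrating $\dot E(y)=u(a,\mu+\delta+y)$ in $\delta$, using $y_*>0>y_1$. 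Your argument is global and non-perturbative: restricted to the critical set, $E$ collapses to $apq+p+q=\mathcal E(S)$ with $S=p+q$ and $pq=e^{\mu_0+\mu_1-aS}$, a strictly convex function that is strictly increasing precisely where $v_0v_1<1$, i.e.\ where local maxima live; equality of two global maxima then forces equal $S$, equal $pq$, hence the same unordered pair $\{p,q\}$, and the relation $\ln(p/q)=\delta+a(p-q)$ rules out the interchange unless $\delta=0$. This buys you uniqueness on all of $\mathcal R$ in one stroke, without tracking how the critical points merge at the boundary curve \eqref{xi}, and it also delivers non-degeneracy cleanly (a critical point with $v_0v_1=1$ sits where $h=v_0v_1$ crosses $1$ transversally, so it is an inflection, not a maximum). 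Your unimodality analysis of $h$ plays the role of the paper's function $c(y)$ in \eqref{E4}, and your monotonicity proof for $\psi$ replaces the paper's counting of zeros of $c$. One small point worth a sentence in a final write-up: your identity for $f(a,x)-x^2/(2a)$ is applied with $x=\mu_0+y$ while $E$ subtracts $y^2/(2a)$; the discrepancy is linear in $y$ (namely $\mu_0 x/a$ up to a constant) and is dominated by the $-u\ln u$ term, but this should be said explicitly rather than left implicit.
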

In this statement, by saying that $y_*\in \mathds{R}$ is a
non-degenerate (resp. degenerate) maximum of $E$ we mean that its
second derivative satisfies $E''(y_*)<0$ (resp. $E''(y_*)<0$).

In the next statement -- the main result of this work -- we describe
the thermodynamics of the model at $(a,\mu_0,\mu_1)$ belonging to
$\mathcal{R}$ and $\mathcal{M}$. The behavior at the critical points
will be studied in a separate work.
\begin{theorem}
  \label{1tm}
The phase diagram of the model defined in (\ref{U1}) and  (\ref{U2})
is such that the following holds:
\begin{itemize}
\item[(i)] For each $(a,\mu_0,\mu_1)\in\mathcal{R}$, there exists a unique phase $P_{\tilde{z}_0 ,
\tilde{z}_1}$ with activities
\begin{equation}
  \label{7}
\tilde{z}_0 = u(a,\mu_0 + y_*) , \quad \tilde{z}_1 = u(a,\mu_1 -
y_*),
\end{equation}
where $y_*$ is the point of a unique global maximum of $E$
corresponding to this $(a,\mu_0,\mu_1)$.
\item[(ii)]  For each
$(a,\mu_0, \mu_1) \in \mathcal{M}$ (i.e., for $\mu_0 = \mu_1 = \mu >
1- \ln a$), there exist two phases: $P_{\tilde{z}^{+},
\tilde{z}^{-}}$ and $P_{\tilde{z}^{-} , \tilde{z}^{+}}$. Here
\begin{equation}
  \label{9}
\tilde{z}^{\pm} = u(a, \mu \pm \bar{y}(a,\mu)),
\end{equation}
and $\bar{y}(a,\mu)>0$ is the unique solution of the equation in
(\ref{6b}).
\item[(iii)]
The pressure $p$ defined in (\ref{4}) has the following form
\begin{equation}
  \label{9a}
 p = p(a, \mu_0, \mu_1)=a \varrho_0 \varrho_1 + \varrho_0 + \varrho_1,
\end{equation}
where the densities satisfy $\varrho_i = \tilde{z}_i$, $i=0,1$,  cf.
(\ref{7}).
\end{itemize}
\end{theorem}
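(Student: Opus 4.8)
The plan is to linearize the bilinear interaction $-\tfrac{a}{V}n_0 n_1$ by a Hubbard--Stratonovich (Gaussian) transformation, reducing $\Xi_\Lambda$ to a one-dimensional integral whose integrand is, on the exponential scale, governed by the function $E$ of \eqref{10}; the thermodynamic limit is then extracted by a Laplace-type asymptotic analysis. Using the elementary identity
\[
-\frac{a}{V}n_0 n_1 = -\frac{a}{2V}n_0^2-\frac{a}{2V}n_1^2+\frac{a}{2V}(n_0-n_1)^2
\]
together with $e^{\lambda s^2/2}=(2\pi\lambda)^{-1/2}\int_{\mathds{R}}e^{-t^2/(2\lambda)+st}\,dt$ for $\lambda=a/V$, $s=n_0-n_1$, and then summing over $n_0$ and $n_1$, one obtains the exact representation
\[
\Xi_\Lambda(a,\mu_0,\mu_1)=\sqrt{\frac{V}{2\pi a}}\int_{\mathds{R}}e^{-\frac{Vy^2}{2a}}\,Q_\Lambda(\mu_0+y)\,Q_\Lambda(\mu_1-y)\,dy,\qquad Q_\Lambda(x):=\sum_{n=0}^{\infty}\frac{V^n}{n!}\,e^{xn-\frac{a}{2V}n^2}.
\]
The crucial point is that the self-interaction $-\tfrac{a}{2V}n^2$ generated by the decomposition is precisely the one that turns the one-component sum $Q_\Lambda$ into $e^{Vf}$: applying Stirling's formula and maximizing $\rho-\rho\ln\rho+x\rho-\tfrac a2\rho^2$ over $\rho\ge0$ --- a concave function whose maximizer is $\rho=u(a,x)$ by \eqref{8b} and whose maximum is $f(a,x)$ by \eqref{11} --- gives $\tfrac1V\ln Q_\Lambda(x)\to f(a,x)$, and a sharper saddle-point analysis yields $Q_\Lambda(x)=A(a,x)\,e^{Vf(a,x)}(1+o(1))$ with a smooth, sub-exponential amplitude $A$. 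I expect this to be the principal obstacle: the summand depends on the large parameter $V$ in two intertwined ways (through $V^n/n!$ and through $-\tfrac{a}{2V}n^2$), so Laplace's method does not apply verbatim, and the estimates must be made uniform in $x$ over the range dictated by the outer integral.

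Granting these asymptotics, I would substitute them into the $y$-integral. Since $f(a,x)\sim\tfrac1{2a}x^2-\tfrac1a x\ln(x/a)$ as $x\to+\infty$ and $f(a,x)\to0$ as $x\to-\infty$, the exponent $E(y)$ tends to $-\infty$ superlinearly as $|y|\to\infty$; together with the upper bound on $Q_\Lambda$ this confines the integral, up to an exponentially negligible remainder, to a fixed neighbourhood of the set of global maximizers of $E$. By Proposition \ref{0lm}, for $(a,\mu_0,\mu_1)\in\mathcal{R}$ this set is the single non-degenerate point $y_*$, and a standard Laplace expansion (all prefactors being sub-exponential) gives
\[
\frac1V\ln\Xi_\Lambda(a,\mu_0,\mu_1)\longrightarrow E(y_*)=f(a,\mu_0+y_*)+f(a,\mu_1-y_*)-\frac{y_*^2}{2a}=:p(a,\mu_0,\mu_1);
\]
for $(a,\mu_0,\mu_1)\in\mathcal{M}$ the two non-degenerate maxima $\pm\bar y(a,\mu)$ contribute equally and the same limit results, the extra factor $2$ being invisible under $\tfrac1V\ln$. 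By the implicit function theorem applied to \eqref{12}, $y_*$ (resp.\ $\bar y$) is a smooth function of the parameters, so $p$ is smooth on $\mathcal{R}$ (resp.\ on $\mathcal{M}$). Since $y_*$ is a critical point of $E$, the envelope theorem and the identity $\partial_x f(a,x)=u(a,x)$ (immediate from \eqref{8a}) give $\varrho_0=\partial p/\partial\mu_0=u(a,\mu_0+y_*)=\tilde{z}_0$ and $\varrho_1=\partial p/\partial\mu_1=u(a,\mu_1-y_*)=\tilde{z}_1$, while \eqref{12} at $y=y_*$ reads $y_*=a\tilde{z}_0-a\tilde{z}_1$; substituting this into $E(y_*)$ and simplifying with \eqref{11} yields $p=a\varrho_0\varrho_1+\varrho_0+\varrho_1$, which is part (iii).

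For parts (i) and (ii) I would repeat the computation in the presence of bounded, nonnegative, compactly supported test functions $\phi_0,\phi_1$. Integrating out the particle positions turns every factor of the free-part volume $V$ into $V-c_i$, $c_i:=\int_{\mathds{R}^d}(1-e^{-\phi_i(x)})\,dx$, so the corresponding local Gibbs expectation equals $\Xi_\Lambda^{(c_0,c_1)}/\Xi_\Lambda$, where $\Xi_\Lambda^{(c_0,c_1)}$ has the representation above with $Q_\Lambda$ replaced by $Q_\Lambda^{(c_i)}(x):=\sum_n\tfrac{(V-c_i)^n}{n!}e^{xn-\frac{a}{2V}n^2}$; the refined asymptotics give $Q_\Lambda^{(c)}(x)=e^{-c\,u(a,x)}Q_\Lambda(x)(1+o(1))$ uniformly in $x$ and $c$ on compacts. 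Running the Laplace analysis of the previous paragraph then yields, for $(a,\mu_0,\mu_1)\in\mathcal{R}$,
\[
\frac{\Xi_\Lambda^{(c_0,c_1)}}{\Xi_\Lambda}\longrightarrow \exp\!\Big(-\tilde{z}_0\!\int_{\mathds{R}^d}(1-e^{-\phi_0(x)})\,dx-\tilde{z}_1\!\int_{\mathds{R}^d}(1-e^{-\phi_1(x)})\,dx\Big),
\]
which is exactly the Laplace functional of the product Poisson measure $P_{\tilde{z}_0,\tilde{z}_1}$, proving (i). For $(a,\mu_0,\mu_1)\in\mathcal{M}$ the two maxima of $E$ contribute the two summands $e^{-c_0\tilde{z}^{+}-c_1\tilde{z}^{-}}$ and $e^{-c_0\tilde{z}^{-}-c_1\tilde{z}^{+}}$ with equal weights (by the evenness of $E$ when $\mu_0=\mu_1$), so the symmetric local states converge to the mixture $\tfrac12\big(P_{\tilde{z}^{+},\tilde{z}^{-}}+P_{\tilde{z}^{-},\tilde{z}^{+}}\big)$; adding an infinitesimal symmetry-breaking field --- equivalently, replacing $(\mu_0,\mu_1)$ by $(\mu+h,\mu-h)$, resp.\ by $(\mu-h,\mu+h)$, then letting $\Lambda\to\mathds{R}^d$ and afterwards $h\downarrow0$ --- selects the maximizer $+\bar y$, resp.\ $-\bar y$, and so exhibits the two pure phases $P_{\tilde{z}^{+},\tilde{z}^{-}}$ and $P_{\tilde{z}^{-},\tilde{z}^{+}}$ of (ii). Throughout, the decisive ingredient is the uniformity --- in $x$ near $\mu_0+y_*$ and $\mu_1-y_*$ and in the shifts $c_i$ --- of the asymptotics of $Q_\Lambda$.
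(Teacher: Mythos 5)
Your reduction of $\Xi_\Lambda$ to a one-dimensional integral is exactly the paper's own starting point (cf.\ (\ref{20})--(\ref{22}), where your $Q_\Lambda(x)$ is written as $\exp(Vf_V(a,x))$), and your endgame for claim (iii) -- evaluating $E$ at a critical point and simplifying via (\ref{12}) and (\ref{11}) -- reproduces (\ref{E1a}) and (\ref{L36}); so the skeleton is the same and the proposal is sound in outline. The genuine divergences are in the two technical sub-steps. For the inner sum you propose Stirling plus a saddle-point expansion $Q_\Lambda=A\,e^{Vf}(1+o(1))$; the paper never expands $Q_\Lambda$ at all, but controls the exact finite-volume quantities $f_V$, $u_V=\partial_x f_V$ through the functional equation (\ref{23}), obtaining the sandwich $u(a,x-\tfrac{a}{2V})\le u_V(a,x)\le u(a,x+\tfrac{a}{2V})$, hence $|u_V-u|\le 1/2V$, $|f_V-f|=O(1/V)$, and a $V$-uniform bound on $u''_V$ (Lemmas \ref{1lm}--\ref{2lm}, Corollaries \ref{1co}--\ref{2co}). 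This is precisely the ``principal obstacle'' you flag but do not resolve, and the paper's route buys two things your sketch still owes: the amplitude $A$ is never needed (it is absorbed into $f_V$ and cancels in the ratios (\ref{K8})), and the sandwich yields the \emph{global-in-$x$} upper bound on $Q_\Lambda$ required to kill the tails of the $y$-integral (Lemmas \ref{5lm}--\ref{8lm}); a local saddle-point expansion alone does not give this, so to complete your version you must still prove $x$-uniform two-sided bounds on $Q_\Lambda$ and redo the localization. Your identification of the limit states via Laplace functionals (replacing $V$ by $V-c_i$) is essentially equivalent to the paper's correlation-function computation -- both are ratios of partition functions with $O(1/V)$ shifts of the chemical potentials, cf.\ (\ref{K3})--(\ref{K6}) -- and your formulation is arguably cleaner to state. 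Finally, for (ii) your extraction of the pure phases by a symmetry-breaking field $h\downarrow 0$ taken after $\Lambda\to\mathds{R}^d$ is legitimate (the needed continuity of $y_*$ in $\delta=\mu_0-\mu_1$ is supplied by (\ref{E8})) and complements, rather than contradicts, the paper's direct decomposition of the symmetric limit into the mixture $\tfrac12\bigl(P_{\tilde z^{+},\tilde z^{-}}+P_{\tilde z^{-},\tilde z^{+}}\bigr)$ via Lemma \ref{8lm}.
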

The proof of this theorem will be done in the next section. Let us
now make some related comments. For $\mu_0 =\mu_1 =\mu < 1 - \ln a$,
by Proposition \ref{0lm} $E$ has a unique maximum at $y_*=0$.
According to (\ref{7}) both components of the system are then in the
same state $P_z$ with $z= u(a,\mu)$, and the state of the
two-component system $P_{z,z}$ is symmetric with respect to the
interchange of the components. By claim (ii), for each $\mu>1 - \ln
a$, there exist two different phases at the same $(a,\mu,\mu)\in
\mathcal{M}$, that breaks the symmetry between the components. Hence
we have a \emph{symmetry breaking} phase transition for which the
solution $\bar{y}(a, \mu)$ can serve as an order parameter. For
small $y>0$, we have that $\psi(y) = y^2/24 + o(y^2)$. Therefore,
$$\bar{y}(a, \mu) = \sqrt{24(\mu-(1-\ln a))} + o(\mu-(1-\ln a))$$ for
small positive $\mu - (1-\ln a)$. At the same time, for fixed
$\mu_0+\mu_1 = 2 - 2 \ln a$ and $\eta = \mu_0 - \mu_1$, we have that
\begin{equation*}
\bar{y}(a, \mu) = 2 \left( 6 \eta\right)^{1/3} + o(\eta^{1/3}),
\end{equation*}
which yields the classical mean-field value of the corresponding
critical exponent, cf \cite[Table V.2, page 173]{EN}. Each
$\varrho_i$ that appears in (\ref{9a}) depends on both $\mu_0$,
$\mu_1$ and satisfies $\varrho_i = \frac{\partial p}{\partial
\mu_i}$, $ i=0,1$, cf. (\ref{4x}). Note that the densities also
satisfy
\begin{equation}
  \label{9d}
\varrho_0 = \exp\left(\mu_0 - a \varrho_1 \right), \quad \varrho_1 =
\exp\left(\mu_1 - a \varrho_0 \right).
\end{equation}
That is, due to the repulsion both densities are smaller than they
are in the free case $a=0$.

Let us turn now to the ground states which one obtains by passing to
the limit $a\to +\infty$. To this end, we consider $\varrho_i$,
$i=0,1$, as differentiable functions of $a$ defined in (\ref{9d}).
Let $\dot{\varrho}_i$, $i=0,1$, stand for the corresponding
$a$-derivatives. Differentiating both sides of each equality in
(\ref{9d}) after some calculations we get
\begin{equation}
  \label{9e}
\dot{\varrho}_0 - \dot{\varrho}_1 = \frac{a \varrho_0 \varrho_1}{1 -
a^2 \varrho_0 \varrho_1}\left( \varrho_0 - \varrho_1 \right).
\end{equation}
The denominator here is positive by the fact that $y_*$ used in
(\ref{7}) is the point of local maximum of $E$ given in (\ref{10}).
Indeed, by claim (iii), (\ref{7}), (\ref{8a}) and  (\ref{10})  we
have that
\begin{eqnarray*}
1 - a^2 \varrho_0  \varrho_1 & = & 1 - a^2
u(a,\mu_0+y_*)u(a,\mu_1-y_*)\\[.2cm] &= &
- a E''(y_*) \left[ 1 + a u(a, \mu_0+y_*) \right] \left[ 1 + a u(a,
\mu_1-y_*) \right]>0.
\end{eqnarray*}
If $\mu_0
> \mu_1$, then $\varrho_0
> \varrho_1$ for all $a>0$.
For assuming $\varrho_0= \varrho_1$ for some $a>0$, we get by
(\ref{9d}) that $e^{\mu_0}=e^{\mu_1}$ and hence  $\mu_0 = \mu_1$.
Thus, by (\ref{9e}) $\varrho_0 - \varrho_1$ is an increasing
function of $a$, which yields $\varrho_0 - \varrho_1 \geq (\varrho_0
- \varrho_1)|_{a=0} = e^{\mu_0} - e^{\mu_1}$. By (\ref{9d}) and the
latter estimate we then get
\begin{eqnarray}
  \label{9f}
 \varrho_1 & = & \varrho_0 \exp\left(- (\mu_0 -\mu_1) - a (\varrho_0 - \varrho_1)
 \right) \\[.2cm] \nonumber
& \leq & \varrho_0 \exp\left(- (\mu_0 -\mu_1) - a (e^{\mu_0} -
e^{\mu_1})
 \right).
\end{eqnarray}
Since $\varrho_0 \leq e^{\mu_0}$, see (\ref{9d}), by (\ref{9f}) we
obtain that $a\varrho_1 \to 0$, and hence $\varrho_1 \to 0$, as
$a\to +\infty$. At the  same time, $\varrho_0 \geq \varrho_1 +
(e^{\mu_0}-e^{\mu_1})$, which by (\ref{9d}) yields that $\varrho_0
\to e^{\mu_0}$ as $a\to +\infty$. By (\ref{7}) we thus conclude that
the model has two ground states: $P_{z_0,0}$ and $P_{0,z_1}$. In
each of them, there is only one free component.

To illustrate the results described above we present in Fig.
\ref{F1} the part of the phase diagram in the plane in $\mathcal{F}$
with fixed $a>0$.
\begin{figure}[h!]
\includegraphics[width=240pt]{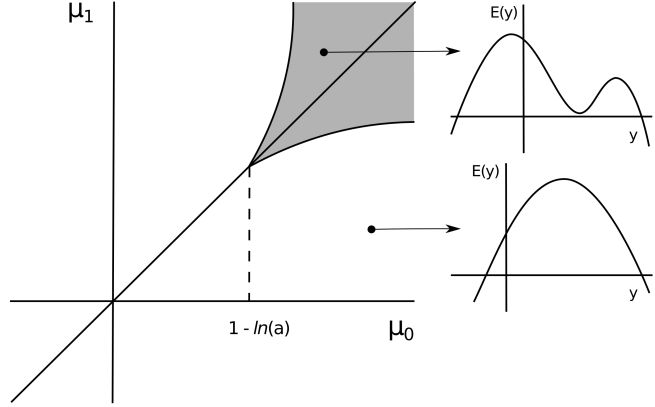}
\caption{Phase diagram at fixed $a$ }\label{F1}
\end{figure}
Points from the grayed area correspond to the existence of three
solutions of (\ref{12}), one of which is $y_*$. Note that $y_* >0$
for $\mu_0 > \mu_1$. At the boundaries of this area (symmetric under
 $\mu_0 \leftrightarrow \mu_1$), (\ref{12}) has only two
solutions. The upper branch of the boundary is described by the
equation
\begin{equation}
\label{xi}
\eta = \sqrt{\xi^2 -1} + \ln \left(\xi - \sqrt{\xi^2 -1}
\right),
\end{equation}
where $\xi = (\mu_0 + \mu_1 )/2 + \ln a$ and $\eta =
(\mu_1 - \mu_0)/2$. The lower branch is given also by (\ref{xi}) with the same $\xi$ and $\eta =( \mu_0 - \mu_1)/2$. For all points from the complement to the grayed
area, (\ref{12}) has only one solution. Note that $y_* =0$ for
$\mu_0 = \mu_1 < 1 - \ln a$.

\subsubsection{The one-component model}

In this case, the phase space is $\widehat{\mathcal{F}}=\{(a, \mu):
a\geq 0, \mu\in \mathds{R}\}$, cf (\ref{2a}) and (\ref{U4}).
\begin{theorem}
  \label{2tm}
The phase space $\widehat{\mathcal{F}}$ can be divided into disjoint
subsets $\widehat{\mathcal{R}}$, $\widehat{\mathcal{M}}$ and the
critical point $(e/\theta, \ln \theta)$. Here
$\widehat{\mathcal{M}}=\{ (a,\mu): a> e/\theta, \mu = \ln \theta\}$
is the phase coexistence line and $\widehat{\mathcal{R}}$ is the
single-phase domain. For each $(a, \mu)\in \widehat{\mathcal{R}}$,
there exists a unique phase $P_z$ with activity $z=u(a, \mu+y_*)$
where $y_*$ is the point of a unique global maximum of the function
defined in (\ref{10}) with $\mu_0 = \mu$ and $\mu_1 = \ln \theta$.
For each $(a, \mu)\in \widehat{\mathcal{M}}$, there exist two
phases: $P_{z^+}$ and $P_{z^-}$, where $z^{\pm}$  and the order parameter $\bar{y}$   are the same as  in (\ref{9}).
\end{theorem}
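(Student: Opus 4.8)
The plan is to obtain Theorem \ref{2tm} as a direct consequence of Theorem \ref{1tm} by specializing the two-component model to the affine slice $\{\mu_1 = \ln\theta\}$ of the phase space $\mathcal{F}$ and by using the identity in the last line of (\ref{U3}). First I would set up the dictionary between the two parameter sets. On the slice $\mu_1 = \ln\theta$, the diagonal condition $\mu_0 = \mu_1$ appearing in (\ref{U5}) forces $\mu_0 = \mu_1 = \ln\theta$; the inequality $\mu > 1 - \ln a$ then becomes $\ln\theta > 1 - \ln a$, i.e. $\ln(a\theta) > 1$, i.e. $a > e/\theta$; and the equality $1 - \ln a = \ln\theta$ defining $\mathcal{C}$ is equivalent to $a = e/\theta$. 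Identifying $(a,\mu_0,\ln\theta)$ with $(a,\mu_0)\in\widehat{\mathcal{F}}$, this shows that $\mathcal{M}\cap\{\mu_1 = \ln\theta\}$ is exactly $\widehat{\mathcal{M}} = \{(a,\mu): a>e/\theta,\ \mu = \ln\theta\}$, that $\mathcal{C}\cap\{\mu_1 = \ln\theta\}$ is the single point $(e/\theta,\ln\theta)$, and that the complement $\mathcal{R}\cap\{\mu_1 = \ln\theta\}$ is $\widehat{\mathcal{R}}$; this already yields the announced partition of $\widehat{\mathcal{F}}$.

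Next I would relate the finite-volume states. Summing out the type-$1$ variables in the local Gibbs measure $P_{e^\mu,\theta}^{\Lambda,\Phi}$ of the two-component model, exactly as in the derivation of the last line of (\ref{U3}) --- the inner sum over $n_1$ is a Poisson partition function with effective activity $\theta\exp(-a n_0/V)$ --- one finds that the type-$0$ marginal of $P_{e^\mu,\theta}^{\Lambda,\Phi}$ coincides with the local Gibbs measure $\widehat P_{e^\mu}^{\Lambda,\widehat\Phi}$ of the one-component model with energy (\ref{2a}); here the factor $\exp(-V\theta)$ is absorbed by the relation $\widehat\Xi_\Lambda = \exp(-V\theta)\,\Xi_\Lambda$ between the two normalisations. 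Since taking a marginal is continuous for the convergence by which the phases are constructed as limits $\Lambda\to\mathds{R}^d$, the limit points of $(\widehat P_{e^\mu}^{\Lambda,\widehat\Phi})_\Lambda$ are precisely the type-$0$ marginals of the limit points of $(P_{e^\mu,\theta}^{\Lambda,\Phi})_\Lambda$.

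Now I would read off the conclusions from Theorem \ref{1tm}. For $(a,\mu)\in\widehat{\mathcal{R}}$, i.e. $(a,\mu,\ln\theta)\in\mathcal{R}$, claim (i) of Theorem \ref{1tm} gives a unique two-component phase $P_{\tilde z_0,\tilde z_1} = P_{\tilde z_0}\otimes P_{\tilde z_1}$ with $\tilde z_0 = u(a,\mu+y_*)$, $y_*$ being the point of the unique global maximum of the function $E$ of (\ref{10}) with $\mu_0 = \mu$ and $\mu_1 = \ln\theta$; its type-$0$ marginal is $P_{u(a,\mu+y_*)}$, which is therefore the unique phase of the one-component model. For $(a,\mu)\in\widehat{\mathcal{M}}$, i.e. $(a,\mu,\mu)\in\mathcal{M}$ with $\mu = \ln\theta$, claim (ii) gives the two two-component phases $P_{\tilde z^+,\tilde z^-}$ and $P_{\tilde z^-,\tilde z^+}$ with $\tilde z^\pm = u(a,\mu\pm\bar y(a,\mu))$ and $\bar y(a,\mu)>0$ the unique solution of (\ref{6b}); since $\bar y(a,\mu)>0$ and $u(a,\cdot)$ is strictly increasing by (\ref{8a}), we have $\tilde z^+\neq\tilde z^-$, so the type-$0$ marginals $P_{\tilde z^+}$ and $P_{\tilde z^-}$ are two distinct one-component phases, which are exactly $P_{z^\pm}$ with $z^\pm$ and the order parameter $\bar y$ as in (\ref{9}). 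Finally the identity $\widehat p = p - \theta$, already recorded after (\ref{U3}) and (\ref{4}), transfers the equation of state, and $\varrho = \varrho_0 = \tilde z_0$ by (\ref{4x}).

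The argument is essentially bookkeeping, so there is no deep obstacle. The one point requiring a little care is the passage of the marginal correspondence through the thermodynamic limit: one must make sure that no extra one-component phase appears that does not come from a two-component one (immediate, since the finite-volume identity of measures is exact) and, in the coexistence case, that the two two-component phases do not collapse to a common marginal --- which is precisely why the strict inequality $\tilde z^+\neq\tilde z^-$, equivalently $\bar y(a,\mu)>0$, is used. All the remaining assertions of Theorem \ref{2tm} --- the location of the critical point, the description of $y_*$ through $E$, and the order parameter $\bar y$ --- are then immediate translations of the corresponding statements of Theorem \ref{1tm} and Proposition \ref{0lm}.
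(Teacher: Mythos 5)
Your argument is correct and follows essentially the same route as the paper, which simply asserts that Theorem \ref{2tm} ``readily follows'' from Theorem \ref{1tm} via the slice $\mu_1=\ln\theta$, the identity $\widehat{\Xi}_\Lambda=e^{-V\theta}\,\Xi_\Lambda$ from (\ref{U3}), and the density relation (\ref{4x}). Your elaboration of the dictionary between the phase diagrams and of the finite-volume marginal identity (the type-$0$ marginal of the two-component Gibbs measure being the one-component Gibbs measure) is exactly the bookkeeping the paper leaves implicit, and it is carried out correctly.
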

The proof of this theorem readily follows from Theorem \ref{1tm}.
The density of the particles in state $P_z$ is $\varrho = z = u(a,
\mu+y_*)$; it depends also on $\theta$. Moreover, by (\ref{9d}) we
have that $\varrho$ satisfies
\begin{equation}
  \label{U7}
  \varrho = \exp\left(\mu - \theta e^{- a \varrho} \right).
\end{equation}
It is an increasing and continuous function of $\mu$ whenever $a
\leq e/\theta$. For $a
> e/\theta$, $\varrho$ makes a jump at $\mu=\ln \theta$ with one-sided
limits $\lim_{\mu \to \ln \theta \pm 0}\varrho = u(a, \mu\pm
\bar{y}(a,\mu))$. That is, the system undergoes a first-order phase
transition with the increment of the density $\varDelta \varrho =
\bar{y}(a,\mu)/a$. The pressure defined in the second line of
(\ref{4}) is, cf (\ref{9a}),
\begin{equation}
  \label{9g}
 \widehat{p} = \widehat{p} (a, \mu)=  a \theta \varrho e^{- a \varrho} + \varrho - \theta
 \left(1 - e^{- a \varrho} \right),
\end{equation}
that can be obtained by (\ref{U7}) and the formula $\widehat{p}= p-
\theta$. Its dependence on $\mu\in \mathds{R}$ comes only from the
corresponding dependence of $\varrho$ just mentioned. In view of
this, for a fixed $a$, $\widehat{p}$ can also be considered as a
function of $\varrho$. This is typical for the corresponding works
employing the canonical partition function calculated for a fixed
number of particles $n$ in a vessel of volume $V$. Ten $\widehat{p}$
is obtained in the thermodynamic limit $n\to +\infty$, $V\to
+\infty$, taken in such a way that  $n/V \to \varrho$. In this case,
the density appears as an independent parameter of the theory, see,
e.g., \cite{LP,WR,Zag}. The drawback of this way is that, for $a
> e/\theta$, $\widehat{p}$ is a decreasing function of
$\varrho$ on a subinterval of $[z^{-}, z^{+}]$, which is impossible
from the physical point of view. The correct form of this dependence
can be deduced from the information on the dependence of $\varrho$
on $\mu$ discussed above. Namely, $\widehat{p}$ is given as in
(\ref{9g}) for $\varrho \leq z^{-}$ and $\varrho \geq z^{+}$. On the
interval $[z^{-}, z^{+}]$ it is constant, i.e.,
\begin{equation}
  \label{U8}
\widehat{p}\equiv \widehat{p}_* := a z^{+} z^{-} + z^{+}+ z^{-} -
\theta.
\end{equation}
In the canonical formalism, the horizontal part of the dependence of
$\widehat{p}$ on $\varrho$ may be obtained from the Maxwell rule, cf
\cite{LP}. To check whether this rule works in our case we have to
show that the following holds
\begin{equation}
  \label{U9a}
\int_{1/z^{+}}^{1/z^{-}} \left[ \widehat{p}(1/v) - \widehat{p}_*
\right] d v =0,
\end{equation}
which is equivalent to, cf (\ref{U8}),
\begin{equation}
  \label{U9}
- \int_{z^{-}}^{z^{+}} \left[ a \theta \varrho e^{-a \varrho} +
\varrho + \theta e^{-a \varrho}  \right] d \frac{1}{\varrho} = a
\left(  z^{+} - z^{-} \right)+ \frac{z^{+}}{z^{-}} -
\frac{z^{-}}{z^{+}}.
\end{equation}
We take into account that $a \theta \varrho e^{-a\varrho} d
\frac{1}{\varrho} = \frac{\theta}{\varrho} d e^{-a \varrho}$, and
then by integrating by parts we bring the left-hand side of
(\ref{U9}) to the following form
\begin{equation}
  \label{U10}
  {\rm LHS}(\ref{U9}) = \frac{\theta}{z^{-}} e^{-a z^{-}} - \frac{\theta}{z^{+}} e^{-a
  z^{+}} + \ln \frac{z^{+}}{z^{-}}.
\end{equation}
Since $z^{\pm} = u(a, \mu \pm \bar{y}(a, \mu))$, by the first line
in (\ref{8a}) it follows that
\[
\frac{\theta}{z^{\pm}} e^{-a z^{\pm}} = \theta e^{-\mu \mp
\bar{y}(a, \mu)} = e^{\mp \bar{y}(a, \mu)},
\]
where we have taken into account that $\mu = \ln \theta$ as $(a,
\mu)\in \widehat{\mathcal{M}}$. Likewise, we have that $z^{+} /
z^{-} = e^{\bar{y}(a,\mu)}$. On the other hand, by (\ref{12}) it
follows that $\bar{y}(a, \mu) = a (z^{+} - z^{-})$. We apply the
latter three facts in (\ref{U10}) and obtain
\begin{gather*}
{\rm LHS}(\ref{U9}) = e^{\bar{y}(a, \mu)} - e^{- \bar{y}(a, \mu)} +
\bar{y}(a, \mu) = \frac{z^{+}}{z^{-}} - \frac{z^{-}}{z^{+}} + a
(z^{+} - z^{-}) = {\rm RHS}(\ref{U9}),
\end{gather*}
which completes the proof of (\ref{U9a}).  Note that the equation of
state in (\ref{9g}) with $a=1$ formally coincides with that found
heuristically in \cite{WR}.

\section{Proving Theorem \ref{1tm}}
\label{2S}

We divide the proof into the following steps. First we prove
Proposition \ref{0lm} that relates the phase diagram (\ref{U5}) to
the properties of the function $E$. Thereafter, we relate $E$ with
the large $V$ asymptotic of $\ln \Xi_\Lambda/V$.

\subsection{The proof of Proposition \ref{0lm}}

By the very definition, see (\ref{10}) -- (\ref{8a}), it readily
follows that $E$ is an infinitely differentiable function. To prove
that it attains its global maxima not at infinity, let us show that
\begin{equation}
  \label{eE}
  E(y) \to - \infty, \qquad {\rm as} \ \ |y|\to +\infty.
\end{equation}
Since $E$ is symmetric with respect to the simultaneous interchange
$y \leftrightarrow -y$ and $\mu_0 \leftrightarrow \mu_1$, it is
enough to prove (\ref{eE}) for $y\to+\infty$. By (\ref{8b}) we have
\begin{eqnarray}
  \label{E1}
u(a,x)& < &  \frac{x}{a} - \ell (a,x), \qquad x>a, \\[.2cm]
\nonumber \ell (a,x) & = & \frac{1}{a} \ln \left(\frac{ x}{a} -
\frac{1}{a} \ln \frac{x}{a}\right).
\end{eqnarray}
Note that $\ell (a,x)>0$ for $x>a$ and $\ell(a,x) \to +\infty$ as
$x\to +\infty$. On the other hand, since $u(a,x)
>0$, for $x<0$ we have that $u(a,x)<e^x$ and hence $u(a,x)\to 0$ as
$x\to -\infty$. By (\ref{8a}) we get that $u$ is an increasing
function of $x$, which by (\ref{E1}) and (\ref{12}) yields that $y>
w(y)$ (resp. $y< w(y)$) for big enough $y$ (resp. $-y$). Thus
(\ref{12}) has at least one solution, say $y_{0}$. For $\mu_0
=\mu_1$, $w(0)=0$; hence, this solution gets positive for
$\mu_0>\mu_1$. By (\ref{10}) and (\ref{12}) we have that
\begin{equation}
  \label{E1a}
  E(y_0) = \frac{a}{2} u(a, \mu_0 +y_0) u(a, \mu_1 -y_0) + u(a, \mu_0
  +y_0) + u(a, \mu_1- y_0) >0,
\end{equation}
holding for all $y_0$ such that $w(y_0) = y_0$.

By (\ref{E1}), (\ref{10}) and (\ref{11}) for $y> \max\{ a-\mu_0;
\mu_1\}$ we obtain
\begin{eqnarray}
  \label{E2}
E(y) & < & \frac{a}{2}\left( \frac{\mu_0+y}{a} - \ell(a, \mu_0 +
y)\right)^2 + \frac{\mu_0+y}{a} - \ell(a, \mu_0 + y) \\[.2cm] \nonumber & + & \frac{a}{2}+
1 - \frac{y^2}{2a}  =  - A_1 (y) y - A_2 (y) \ell(a, \mu_0 + y)  +
\frac{\mu_0^2}{2a} + \frac{a}{2}+ 1,
\end{eqnarray}
with
\begin{gather}
  \label{E3}
A_1 (y) = \frac{1}{2}\ell(a, \mu_0 + y) - \frac{\mu_0+1}{a},
\\[.2cm] \nonumber
A_2 (y) = \frac{y}{2} - \frac{a}{2} \ell(a, \mu_0 + y) + \mu_0 + 1.
\end{gather}
Clearly, both these coefficients get positive for sufficiently big
$y$, which by (\ref{E2}) yields (\ref{eE}). This means that the
global maxima of $E$ are attained not at infinity and hence are also
local maxima, cf (\ref{E1a}). Therefore, the global maxima of this
function are to be found by solving the equation in (\ref{12}). Let
us first consider the case where $\mu_0 = \mu_1 = \mu$. Then $w$ is
an odd function and hence $y=0$ is a solution of (\ref{12}). By
(\ref{E1}) similarly as the estimate in (\ref{E2}) we obtain that
$w(y) < y$ for sufficiently large $y$, and hence $w(y) - y$ is
eventually negative. Obviously, the existence of positive solutions
of (\ref{12}) is determined by the slope of the curve $(w(y),y)$.
This means that we have to study the dependence of $w'(y)-1$ on $y$.
By means of (\ref{8a}) we get that
\begin{eqnarray}
  \label{E4}
  w'(y) -1 & = & \frac{c(y)}{[1 + a u(a, \mu +y)][1 + a u(a,
  \mu -y)]}, \\[.2cm] \nonumber c(y) & := & a^2 u(a, \mu +y) u(a, \mu -y)
  -1.
\end{eqnarray}
That is, the number of positive solutions of $w(y)=y$ coincides with
that of  $w'(y) =1$, and thus of $c(y)=0$. We apply (\ref{8a}) once
more and obtain
\begin{equation}
 \label{E5}
  c'(y) = \frac{a^3 u(a, \mu +y) u(a, \mu-y)}{[1 + a u(a, \mu +y)][1 + a u(a,
  \mu-y)]}\left[u(a, \mu-y) -
  u(a, \mu +y)\right].
\end{equation}
Since $u(a,\mu+y)$ is an increasing function of $y$, $c(y)$ has a
unique maximum at $y=0$. By the analysis made above regarding the
dependence of $u(a,x)$ on $x$ we conclude that $c(y) \to -1$ as
$y\to \pm\infty$. This and (\ref{E5}) imply that  $c$ has two real
zeros, say $\pm \hat{y}$, $\hat{y}>0$, whenever $c(0)
>0$. It has a single zero at $y=0$ if $c(0)
=0$. If $c(0)<0$, then $c(y)<0$ for all real $y$. In view of
(\ref{E4}),  we then have the following options: (i) $c(0)>0$, and
hence $w'(0)
>1$, which implies that $w(y) =y$ holds for $y=0$ and $y=\pm \bar{y}$, such
that $\bar{y} \geq \hat{y}$; (ii) $c(0)=0$, and hence $w'(0) =1$ and
$w'(y) <1$ for all $y>0$, which implies and $w(y) < y$ for all
$y>0$; (iii) $c(0) <0$, and hence $w(y) < y$ for all $y>0$. Let us
analyze these possibilities in terms of the parameters $\mu$ and
$a$. In case (ii), we have $u(a,\mu) =1/a$, which by (\ref{8a})
yields $\mu = \mu_c :=1-\ln a$ that determines a critical point, cf
(\ref{U5}). Since $c(0)$ is an increasing function of $\mu$, then
$c(0)>0$ implies that $\mu> \mu_c$ that corresponds to case (i).
Likewise, $\mu<\mu_c$ in case (iii). To relate this with $E$ we use
the fact that $E'' (y) = (w'(y) - 1)/a$, see (\ref{10}), (\ref{11})
and (\ref{12}). Thus, in case (i), $E$ has two equal non-degenerate
local (and also global) maxima at $\pm\bar{y}=\pm \bar{y}(a,\mu)$
and one local minimum at 0. In case (ii), $E$ has a degenerate
maximum at 0. In case (iii), this unique maximum gets
non-degenerate. This proves claim (b) of the statement, and the part
of (a) corresponding to the case of equal $\mu_i$. Let us show that,
for $y>0$, (\ref{12}) turns into (\ref{6b}). To simplify notations
by the end of this proof we set $v_{\pm}(y) = a u(a,\mu \pm y)$.
Then $v_{\pm}(y) \exp(v_{\pm}(y)) = \exp(\mu +\ln a \pm y)$, see
(\ref{8a}). We combine this with (\ref{12}) in the form $y =
v_{+}(y) -v_{-}(y)$ to obtain
\begin{equation}
  \label{E6}
 v_{+}(y) = \frac{ye^y}{e^y-1}, \quad v_{-}(y) = \frac{e^y}{e^y-1}.
\end{equation}
Then we rewrite (\ref{8b}) in the form
\[
\ln v_{-}(y) + v_{-}(y) + y -1 = \mu - (1-\ln a),
\]
that by (\ref{E6}) coincides with (\ref{6b}).

To complete the proof we have to consider the case of unequal
$\mu_i$. In view of the mentioned symmetry of $E$, it is enough to
consider the case $\mu_0 \geq \mu_1$. Set $\mu_1 = \mu$ and $\mu_0 =
\mu +\delta$, and then
\begin{equation}
  \label{E7}
w(\delta , y)= a u(a, \mu +\delta +y) - a u(a, \mu -y).
\end{equation}
By (\ref{E1}) we have that $w(\delta , y) < y$ for sufficiently
large $y$. At the same time, $w(\delta , 0) >0$ for $\delta >0$.
That is, (\ref{12}) has at least one positive solution, say $y_*$,
in this case. It is such that $E''(y_*) = (w'(y_*)-1)/a <0$; i.e.,
$E$ has a non-degenerate maximum at $y_*$. By standard arguments
based on the implicit function theorem we have that $y_*$ is a
continuous function of $\delta\geq0$ that tends to a nonnegative
solution of (\ref{12}) as $\delta \to 0^+$. Its $\delta$-derivative
$\dot{y}_*$ can be calculated from the equality $y_* =
w(\delta,y_*)$, which yields
\begin{equation}
  \label{E8}
\dot{y}_* = \frac{a u (a, \mu+\delta + y_*)} {\left[ 1 - w'(\delta ,
y_*)\right]\left[ 1+ a u (a, \mu+\delta + y_*)\right]} >0.
\end{equation}
That is, for $\mu\leq \mu_c$ and $\delta >0$, $y_*>0$ is the only
maximum point of $E$, and $y_* \to 0$ as $\delta \to 0^+$. For $\mu
>\mu_c$, by the positivity in (\ref{E8}) we have that $y_* >
\bar{y}(a,\mu)$. In this case, we have two more solutions of $w(0,y)
=y$. By the $\delta$-continuity of the solutions of $y =
w(\delta,y)$ it should have two more solutions, say $y_1$ and $y_0$,
close to $-\bar{y}(a,\mu)$ and zero, respectively, for small enough
$\delta>0$. Their derivatives have the form as in (\ref{E8}) with
$y_*$ replaced by the corresponding $y_j$. Since $w'(\delta , y_0)$
is close to $w'(0,0)$, then $\dot{y}_0 <0$, and hence $y_0 <0$. At
the same time, $\dot{y}_1 >0$ for the same reason. That is, these
two solutions move towards each other as $\delta$ increases. Let us
compare the values of $E$ at $y_*$ and $y_1$. For $\delta =0$, we
have that $E(y_1)=E(y_*)$. The $\delta$-derivative $\dot{E}(y)$ can
be calculated from (\ref{10}), which yields
\begin{equation*}
\dot{E}(y) = u (a, \mu + \delta + y) + \dot{y}\left[w(\delta , y)-y
\right]/a = u (a, \mu + \delta + y).
\end{equation*}
Here we have taken into account that $w(\delta , y)=y$ for $y=y_1,
y_*$. Thus, $\dot{E}(y_*) >\dot{E}(y_1)$ since $y_* > 0 > y_1$ and
$u$ is an increasing function of $y$. This means that $y_*$ is the
point of non-degenerate local and global maximum of $E$.

As follows from this proof, $y_1 < y_0<0$ for $\mu>\mu_c$ and small
$\delta>0$. Let us fix $\mu>\mu_c$ and find $\delta>0$ and $y<0$
such that $y_1=y_0 =y$. Note that (\ref{12}) has two solutions in
this case: this $y$ and $y_*>0$. Clearly such $\delta$ and $y$ are
to be found from the equation $w'(\delta, y)=1$. Similarly as above,
set $v_{+} (y) = a u(a, \mu +\delta +y)$, $v_{-} (y) = a u(a, \mu
-y)$. Then $w'(\delta, y)=1$ by (\ref{E4}) yields $v_{+}(y) v_{-}
(y) =1$. By (\ref{8a}) we have
\[
v_{+}(y) v_{-} (y) \exp\left[ v_{+}(y)+ v_{-} (y) \right] =
\exp\left( 2 \mu + \delta + 2 \ln a \right),
\]
by which we get
\begin{equation}
  \label{E10}
v_{+}(y) + v_{-} (y) = 2 \xi := 2 \mu + \delta + 2 \ln a.
\end{equation}
Since $y<0$, we have that $v_{+}(y) < v_{-} (y)$. Keeping this in
mind we solve (\ref{E10}) and $v_{+}(y) v_{-} (y) =1$, which yields
\begin{equation}
  \label{E11}
v_{\pm}(y) = \xi \mp \sqrt{\xi^2 -1}, \quad y = - 2 \sqrt{\xi^2-1}.
\end{equation}
By (\ref{8a}) we have
\[
\frac{v_{+}(y)}{v_{-}(y)} \exp\left( {v_{+}(y)} - {v_{-}(y)}\right)
= \exp\left(2\eta + 2y  \right), \quad \eta := \delta/2.
\]
Now we use here (\ref{E11}) and arrive at (\ref{xi}).

\subsection{Thermodynamics in a fixed vessel}

In equilibrium statistical mechanics, the great canonical ensemble
is determined by the family of local Gibbs measures indexed by all
possible vessels $\Lambda$, see \cite[Chapter 4]{Ruelle}. Such
measures are in turn uniquely determined by their correlation
functions. For a given vessel $\Lambda$ and $x^0_1 , \dots ,
x^0_{n_0}, x^1_1, \dots , x^1_{n_1}\in \Lambda$, the correlation
function $k^{(n_0, n_1)}_\Lambda (x_1^0, \dots , x^0_{n_0}; x_1^1,
\dots , x^1_{n_1})$ is defined as the density (with respect to the
Lebesgue measure) of the probability distribution of the particles
of both types in $\Lambda$. If the potential energy $\Phi_\Lambda$
is given, then
\begin{gather}
  \label{K1}
k^{(n_0, n_1)}_\Lambda (x_1^0, \dots , x^0_{n_0}; x_1^1, \dots ,
x^1_{n_1}) = \frac{1}{\Xi_\Lambda} \sum_{m_0, m_1=0}^\infty
\frac{z_0^{n_0+m_0} z_1^{n_1+m_1}}{m_0! m_1 !} \\[.cm] \nonumber
\times \int_{\Lambda^{m_0}} \int_{\Lambda^{m_1}} \exp\left( -
\Phi_\Lambda (x_1^0, \dots , x^0_{n_0}, y_1^0, \dots , y^0_{m_0} ;
x_1^1, \dots , x^1_{n_1}, y_1^1, \dots , y^1_{m_1}) \right) \\[.cm] \nonumber
\times dy_1^0 \cdots ,d y^0_{m_0} \cdot d y_1^1 \cdots  d y^1_{m_1},
\end{gather}
where $z_0$, $z_1$ and $\Xi_\Lambda$ are the corresponding
activities and the partition function, respectively. The correlation
functions of the states of the whole infinite system can be obtained
in the limit $\Lambda \to \mathds{R}^d$. For the Poissonian state
defined in (\ref{1}) and (\ref{2}), we have that
\begin{equation}
  \label{K2}
k^{(n_0, n_1)} (x_1^0, \dots , x^0_{n_0}; x_1^1, \dots , x^1_{n_1})
= z_0^{n_0} z_1^{n_1}, \qquad n_0, n_1 \in\mathds{N}_0.
\end{equation}
Now for $\Phi_\Lambda$ as in (\ref{U1}) and fixed $\mu_0$, $\mu_1$,
we thus have, cf (\ref{U2}) and (\ref{K1}),
\begin{eqnarray}
  \label{K3}
& & k^{(n_0, n_1)}_\Lambda (x_1^0, \dots , x^0_{n_0}; x_1^1, \dots ,
x^1_{n_1}) =  \exp\left(\mu_0 n_0 +\mu_1 n_1 - \frac{a}{V} n_0 n_1
\right) \\[.cm] \nonumber & & \qquad
\times \frac{1}{\Xi_\Lambda (a,\mu_0, \mu_1)} \sum_{m_0,
m_1=0}^\infty \frac{V^{m_0+m_1}}{m_0! m_1 !} \exp\bigg{(}\left[\mu_0
- \frac{a}{V}n_1\right]m_0 \\[.cm] \nonumber & & \qquad + \left[\mu_1 - \frac{a}{V}m_0\right]m_1
-
\frac{a}{V} m_0 m_1 \bigg{)} \\[.cm] \nonumber & & \qquad =
\exp\left(\mu_0 n_0 +\mu_1 n_1 - \frac{a}{V} n_0 n_1 \right)
\frac{\Xi_\Lambda (a,\mu_0 - a n_1/V, \mu_1 - a n_0/V )}{\Xi_\Lambda
(a,\mu_0, \mu_1)}
\end{eqnarray}
Set
\begin{eqnarray}
  \label{K4}
 F_\Lambda (a , \mu_0 , \mu_1) & = & \frac{1}{V}\ln \Xi_\Lambda (a , \mu_0 ,
 \mu_1), \\[.2cm] \nonumber F^{(i)}_\Lambda (a , \mu_0 , \mu_1) & = &
 \frac{\partial}{\partial \mu_i}F_\Lambda (a , \mu_0 , \mu_1), \quad
 i=0,1,
\end{eqnarray}
and rewrite (\ref{K3}) in the following form
\begin{equation}
  \label{K5}
k^{(n_0, n_1)}_\Lambda (x_1^0, \dots , x^0_{n_0}; x_1^1, \dots ,
x^1_{n_1}) =  \exp\left(\tilde{\mu}^\Lambda_0 n_0
+\tilde{\mu}^\Lambda_1 n_1 - \frac{a}{V} n_0 n_1 \right),
\end{equation}
with
\begin{gather}
  \label{K6}
\tilde{\mu}^\Lambda_0 = \mu_0 - a \int_0^1 F^{(1)}_\Lambda \left(a,
\mu_0, \mu_1 - \frac{a}{V} n_0 t \right) dt, \\[.2cm] \nonumber
\tilde{\mu}^\Lambda_1 = \mu_1 - a \int_0^1 F^{(0)}_\Lambda \left(a,
\mu_0 -\frac{a}{V} n_1 t , \mu_1 - \frac{a}{V} n_0  \right) dt.
\end{gather}
Thus, we have to show that
\begin{eqnarray}
  \label{K7}
& & \tilde{\mu}^\Lambda_0 \to \ln \tilde{z}_0 = \mu_0 - a u(a, \mu_1
- y_*), \\[.2cm] \nonumber
& & \tilde{\mu}^\Lambda_1 \to \ln \tilde{z}_1 = \mu_1 - a u(a, \mu_0
+ y_*),
\end{eqnarray}
as $V\to +\infty$, see (\ref{7}),  (\ref{9d}) and (\ref{K2}). This
means that we have to obtain the large $V$ asymptotic of the
functions defined in (\ref{K4}). To this end by means of the
identity $$- \frac{a}{V}n_0 n_1 = - \frac{a}{2V}n^2_0 -
\frac{a}{2V}n^2_1 + \frac{a}{2V}(n_0 - n_1)^2,$$ and then by the
standard Gaussian formula
$$\exp\left(\frac{ b^2}{2V}\right) =
\sqrt{\frac{V}{2\pi}}\int_{-\infty}^{+\infty} \exp\left( b y -
\frac{V y^2}{2}\right) d y,
$$
we rewrite (\ref{U2}) and (\ref{K4}) in the form
\begin{equation}
  \label{20}
\Xi_\Lambda (a, \mu_0 , \mu_1) = \exp\left( V F_\Lambda (a, \mu_0 ,
\mu_1\right)= \sqrt{\frac{V}{2\pi}} \int_{-\infty}^{+\infty}
\exp\left(V E_V (y) \right) d y,
\end{equation}
with
\begin{equation}
  \label{21}
 E_V(y) = f_V (a, \mu_0 +y) + f_V(a, \mu_1-y) - \frac{y^2}{2a}.
\end{equation}
Here $f_V$ is defined by the following formula
\begin{equation}
  \label{22}
\exp\left( V f_V (a, x)\right) = \sum_{n=0}^\infty \frac{V^n}{n!}
\exp\left( x n - \frac{a}{2V} n^2 \right),
\end{equation}
and thus is an infinitely differentiable function of $x\in
\mathds{R}$ for each fixed $a>0$ and $V>0$. Then so is $ E_V$ as a
function of $y\in \mathds{R}$. Moreover, taking the
$\mu_i$-derivatives of both sides of (\ref{20}) we obtain
\begin{gather}
  \label{K8}
F^{(0)}_\Lambda (a,\mu_0, \mu_1) = \frac{\int_{-\infty}^{+\infty}
u_V (a , \mu_0 +y) \exp\left(V E_V (y)\right) d
y}{\int_{-\infty}^{+\infty}  \exp\left(V E_V (y)\right) d y},
\\[.2cm] \nonumber
F^{(1)}_\Lambda (a,\mu_0, \mu_1) = \frac{\int_{-\infty}^{+\infty}
u_V (a , \mu_1 -y) \exp\left(V E_V (y)\right) d
y}{\int_{-\infty}^{+\infty}  \exp\left(V E_V (y)\right) d y},
\end{gather}
where, cf (\ref{22}),
\begin{gather}
  \label{K9}
u_V (a, x)  = \frac{\partial}{\partial x} f_V (a , x) =
\frac{\langle n \rangle_V}{V} = \frac{1}{V}\sum_{n=1}^\infty n \pi_V
(x,n), \\[.2cm] \nonumber
\pi_V (x,n) = \frac{V^n}{n!} \exp\left( x n - \frac{a}{2V} n^2
\right) \bigg{/}\sum_{n=0}^\infty \frac{V^n}{n!} \exp\left( x n -
\frac{a}{2V} n^2 \right).
\end{gather}
To find the large $V$ asymptotic of the right-hand sides of
(\ref{20}) and (\ref{K8}) we employ a more advanced version of
Laplace's method as $E_V$ depends on $V$. Namely, we will use
\cite[Theorem 2.2, Chapter II]{Fed} which we present here in the
form adapted to the context.
\begin{proposition}
  \label{1pn}
Assume  that, for all big enough $V$, the function defined in
(\ref{21}) has a unique non-degenerate global maximum at some
$y_{*,V}\in \mathds{R}$, so that its second $y$-derivative satisfies
$E''_V(y_{*,V})< 0$. Assume also that there exists a function $V
\mapsto \alpha_V>0$ such that $\alpha_V \to +\infty$ and
\begin{equation}
  \label{K10}
\Delta_V:= \frac{\alpha_V}{\sqrt{V |E''_V (y_{*,V})|}} \to 0 , \quad
{\rm as} \ \ V \to +\infty.
\end{equation}
Set $U_V = [y_{*,V} - \Delta_V,y_{*,V} + \Delta_V]$ and let
$\phi_V(y)$ be constant or either of $u_V (a,\mu_0+ y)$, $u_V
(a,\mu_1 -y)$, cf (\ref{K8}). Then in the limit of large $V$, it
follows that
\begin{eqnarray}
  \label{K11}
& & \int_{U_V} \phi_V(y) \exp\left( V E_V (y)\right)
d y \\[.2cm] \nonumber  & & \qquad \qquad = \sqrt{\frac{2\pi}{V|E''_V (y_{*,V})|}} \phi_V (y_{*,V})
\exp\left( V E_V (y_{*,V})\right)\left[1 + o(1) \right].
\end{eqnarray}
\end{proposition}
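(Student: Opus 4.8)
My plan is to prove (\ref{K11}) by a rescaled Laplace computation, the one nonstandard feature being the dependence of both the phase $E_V$ and the amplitude $\phi_V$ on the large parameter $V$. First I would set $\sigma_V := \bigl(V\,|E''_V(y_{*,V})|\bigr)^{-1/2}$ and change the integration variable by $y = y_{*,V} + \sigma_V s$, so that $U_V$ corresponds to $s\in[-\alpha_V,\alpha_V]$, cf.\ (\ref{K10}), and $dy = \sigma_V\,ds$. Since $E'_V(y_{*,V})=0$ and $V\sigma_V^2|E''_V(y_{*,V})|=1$, Taylor's formula with Lagrange remainder gives
\[
V\bigl(E_V(y_{*,V}+\sigma_V s)-E_V(y_{*,V})\bigr) = -\frac{s^2}{2} + r_V(s), \qquad r_V(s) = \frac{V\sigma_V^3}{6}\,E'''_V(\xi_{V,s})\,s^3,
\]
with $\xi_{V,s}$ between $y_{*,V}$ and $y_{*,V}+\sigma_V s$, so that $|\xi_{V,s}-y_{*,V}|\le\Delta_V$. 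As $\sigma_V = \sqrt{2\pi/(V|E''_V(y_{*,V})|)}\big/\sqrt{2\pi}$, proving (\ref{K11}) is the same as showing
\[
\int_{-\alpha_V}^{\alpha_V} \phi_V(y_{*,V}+\sigma_V s)\, e^{-s^2/2+r_V(s)}\,ds = \sqrt{2\pi}\,\phi_V(y_{*,V})\,\bigl(1+o(1)\bigr), \qquad V\to+\infty.
\]

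Next I would dispose of the amplitude. If $\phi_V$ is constant there is nothing to do; otherwise $\phi_V(y)=u_V(a,\mu_0+y)$ or $u_V(a,\mu_1-y)$, and by (\ref{K9}) one has $\partial_x u_V(a,x) = \mathrm{Var}_V(n)/V$, the variance being taken in the probability distribution $\pi_V(x,\cdot)$; this quantity is $O(1)$ uniformly for $x$ in a fixed neighbourhood of $\mu_0+y_*$ (resp.\ $\mu_1-y_*$). Hence the functions $\phi_V$ are Lipschitz there with a constant independent of $V$, so that $\phi_V(y_{*,V}+\sigma_V s) = \phi_V(y_{*,V})\bigl(1+O(\Delta_V)\bigr)$ uniformly for $|s|\le\alpha_V$ (recall $\sigma_V\alpha_V=\Delta_V\to0$), and since moreover $\phi_V(y_{*,V})$ is bounded away from $0$, the factor $\phi_V(y_{*,V})$ may be pulled out of the integral at the cost of a factor $1+o(1)$. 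It then remains to treat the case $\phi_V\equiv1$, i.e.\ to prove $\int_{-\alpha_V}^{\alpha_V} e^{-s^2/2+r_V(s)}\,ds\to\sqrt{2\pi}$.

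The heart of the matter is the control of $r_V$. Here I would first record the uniform regularity of $f_V$: a one-dimensional asymptotic analysis of the sum (\ref{22}) — whose summand carries the awkward double $V$-dependence, through both $V^n/n!$ and $-an^2/(2V)$ — shows that $f_V(a,\cdot)$ converges to $f(a,\cdot)$ of (\ref{11}) together with its first three derivatives, uniformly on compacts (equivalently, the first three $x$-derivatives of $Vf_V(a,x)$ are the first three cumulants of $\pi_V(x,\cdot)$, each of order $O(V)$). Through (\ref{21}) this yields $E_V\to E$ in $C^3$ on compacts, hence for some $\varepsilon,C>0$ and all large $V$ both $\sup\{|E'''_V(y)|:|y-y_*|\le\varepsilon\}\le C$ and — because all global maxima of $E$ are non-degenerate by Proposition \ref{0lm}, so that $y_{*,V}$ converges to one of them and $E''_V(y_{*,V})$ converges to a negative number there — also $|E''_V(y_{*,V})|\ge c_0>0$. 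Consequently $|r_V(s)|\le c_V\,|s|^3$ on $[-\alpha_V,\alpha_V]$ with $c_V:=\tfrac{C}{6}V^{-1/2}|E''_V(y_{*,V})|^{-3/2}\to0$; moreover $c_V\alpha_V = \tfrac{C}{6}\Delta_V/|E''_V(y_{*,V})|\le (C/6c_0)\Delta_V\to0$, whence $r_V(s)\le s^2/4$ for all $s\in[-\alpha_V,\alpha_V]$. Now split: for each fixed $M$, on $[-M,M]$ we have $|r_V(s)|\le c_VM^3\to0$, so $e^{-s^2/2+r_V(s)}\to e^{-s^2/2}$ boundedly and $\int_{-M}^{M}e^{-s^2/2+r_V(s)}\,ds\to\int_{-M}^{M}e^{-s^2/2}\,ds$ by dominated convergence; on $\{M<|s|\le\alpha_V\}$ both $e^{-s^2/2+r_V(s)}$ and $e^{-s^2/2}$ are bounded by $e^{-s^2/4}$, so the corresponding parts of the two integrals are at most $\int_{|s|>M}e^{-s^2/4}\,ds$, which tends to $0$ as $M\to\infty$ uniformly in $V$. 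Letting $V\to\infty$ and then $M\to\infty$ gives $\int_{-\alpha_V}^{\alpha_V}e^{-s^2/2+r_V(s)}\,ds\to\sqrt{2\pi}$, and hence (\ref{K11}).

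The main obstacle is exactly this control of $r_V$ over the interval $[-\alpha_V,\alpha_V]$, whose length grows with $V$: because the phase $E_V$ itself depends on the large parameter, the classical Laplace asymptotics for a fixed phase function is not available, and one must instead produce the uniform-in-$V$ regularity used above — a third-cumulant bound of the correct order $O(V)$ for the single-component occupation number (equivalently, $C^3$-convergence $f_V\to f$, which is where the ``more complex dependence of the integrands on the large parameter'' from the Introduction really enters) together with the uniform non-degeneracy $|E''_V(y_{*,V})|\ge c_0>0$ — so that the cubic Taylor correction becomes negligible after the $\sigma_V$-rescaling.
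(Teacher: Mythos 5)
Your proposal is correct, but it does something the paper never does: it actually \emph{proves} Proposition \ref{1pn}. In the paper this statement is not proved at all --- it is imported from \cite[Theorem 2.2, Chapter II]{Fed} ``in the form adapted to the context,'' and the surrounding Lemmas \ref{1lm}--\ref{6lm} serve only to verify its hypotheses and to control the integral outside $U_V$ (Lemmas \ref{7lm}, \ref{8lm}). Your rescaling $y=y_{*,V}+\sigma_V s$, the Taylor step with the cubic remainder $r_V$, the Lipschitz control of the amplitude via $0\le u_V'\le u_V$ and the uniform bounds of Corollary \ref{1co}, the estimate $|r_V(s)|\le c_V|s|^3$ with $c_V\alpha_V\to 0$ (hence $r_V\le s^2/4$ on $[-\alpha_V,\alpha_V]$), and the split into $[-M,M]$ plus Gaussian-dominated tails all go through. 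The one ingredient you assert rather than establish --- that the third $x$-derivative of $Vf_V(a,x)$, i.e.\ the third cumulant of $\pi_V(x,\cdot)$, is $O(V)$ uniformly on compacts, so that $E_V'''$ is uniformly bounded near $y_{*,V}$ --- is genuinely nontrivial and is exactly the content of the paper's Lemma \ref{2lm}, whose proof runs through the unimodality of $\pi_V$, the location of the mode within distance $1$ of the mean (\ref{L22}), and the integral identity derived from (\ref{23}); you correctly flag this as the main obstacle rather than sweeping it under the rug. Note also that the hypotheses of Proposition \ref{1pn} as stated do not give you $|E_V''(y_{*,V})|\ge c_0>0$, nor that $y_{*,V}$ stays in a fixed compact set, nor the uniform Lipschitz bound on $\phi_V$; these must be imported from Lemmas \ref{2lm}, \ref{4lm} and \ref{5lm}/\ref{6lm}, which is legitimate since the proposition is only ever applied in that context (and is implicitly what the paper does when ``adapting'' Fedoryuk's theorem). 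What your route buys is a transparent, elementary and self-contained replacement for the black-box citation; what it costs is having to make explicit the uniform-in-$V$ regularity that the general theorem packages into its hypotheses.
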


\subsection{Preparatory statements}

In this  subsection, we obtain a number of results by means of which
we then apply Proposition \ref{1pn} in (\ref{K8}). We begin by
obtaining some bounds on the first two $x$-derivatives of the
function defined in (\ref{K9}) which we denote by $u'_V (a,x)$ and
$u''_V (a,x)$.
\begin{lemma}
  \label{1lm}
For each $x\in \mathds{R}$ and $V>0$, the following holds
\begin{gather}
  \label{K12}
  0 \leq u'_V (a, x) \leq u_V (a, x).
\end{gather}
\end{lemma}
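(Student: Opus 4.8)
The plan is to read off $u_V$ and $u'_V$ as moments of the probability weights $\pi_V(x,\cdot)$ introduced in (\ref{K9}), and then to exploit the (strong) log-concavity of those weights that comes from the Gaussian factor $\exp(-\tfrac{a}{2V}n^2)$. Throughout I would write $\langle g\rangle_V:=\sum_{n\ge 0}g(n)\,\pi_V(x,n)$ for the expectation with respect to $\pi_V(x,\cdot)$; all series occurring below converge absolutely since $V^n e^{-an^2/2V}/n!$ decays super-exponentially in $n$.

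First I would differentiate the defining relation (\ref{22}) once and twice in $x$ (equivalently, use (\ref{K9}) directly) to obtain
\[
u_V(a,x)=\frac{\langle n\rangle_V}{V},\qquad u'_V(a,x)=\frac{1}{V}\bigl(\langle n^2\rangle_V-\langle n\rangle_V^2\bigr)=\frac{1}{V}\bigl\langle (n-\langle n\rangle_V)^2\bigr\rangle_V .
\]
The left-hand inequality in (\ref{K12}) follows at once, the right-hand side being $1/V$ times a variance; alternatively, $Vf_V(a,x)=\ln\sum_n c_n e^{xn}$ with $c_n>0$ is convex in $x$, so $u'_V=\partial_x^2 f_V\ge 0$.

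For the right-hand inequality I would introduce the auxiliary sequence $\lambda_n:=(n+1)\,\pi_V(x,n+1)/\pi_V(x,n)$, which is well defined because $\pi_V(x,n)>0$ for every $n$, and which by (\ref{K9}) simplifies after cancellation to $\lambda_n=V e^{x-a/2V}\,e^{-an/V}$ — a strictly \emph{decreasing} function of $n$. Re-indexing the defining sums (the shift $n\mapsto n+1$) then yields the two elementary identities $\langle\lambda_n\rangle_V=\langle n\rangle_V$ and $\langle n\,\lambda_n\rangle_V=\langle n(n-1)\rangle_V$. Combining them,
\[
u_V(a,x)-u'_V(a,x)=\frac{1}{V}\Bigl(\langle n\rangle_V^2-\langle n(n-1)\rangle_V\Bigr)=\frac{1}{V}\Bigl(\langle n\rangle_V\langle\lambda_n\rangle_V-\langle n\,\lambda_n\rangle_V\Bigr)=-\frac{1}{V}\,\mathrm{Cov}_V(n,\lambda_n),
\]
and this covariance is $\le 0$ because $n\mapsto n$ is non-decreasing while $n\mapsto\lambda_n$ is non-increasing — this is the Chebyshev sum inequality (the one-variable correlation inequality for oppositely monotone functions). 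Hence $u'_V(a,x)\le u_V(a,x)$.

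The only ingredient that is not a one-line computation is the monotonicity of $\lambda_n$, i.e. the log-concavity of $n\mapsto n!\,\pi_V(x,n)$; but this is immediate from (\ref{K9}), since $\ln\bigl(n!\,\pi_V(x,n)\bigr)=n\ln V+xn-\tfrac{a}{2V}n^2+\mathrm{const}$ has constant second difference $-a/V<0$. So I do not anticipate any real obstacle: once the probabilistic reformulation and the sequence $\lambda_n$ are in place, the statement reduces to negativity of a covariance between two oppositely monotone functions of a single integer variable.
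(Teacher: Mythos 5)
Your proof is correct, and for the upper bound it takes a genuinely different route from the paper. The lower bound is identical: both you and the paper identify $u'_V$ as $\tfrac{1}{V}$ times the variance of $n$ under $\pi_V(x,\cdot)$, cf.\ (\ref{K15}). For the upper bound the paper differentiates the definition (\ref{22}) and, via the index shift $n\mapsto n+1$, arrives at the functional equation (\ref{23}); differentiating that once more gives the exact identity (\ref{K14}),
\[
u_V(a,x)-u'_V(a,x)=a\,u_V(a,x)\int_0^1 u'_V\left(a, x-\tfrac{a}{V}t\right)dt\;\ge 0,
\]
the nonnegativity being immediate from the already-established lower bound. You instead stay at the level of the discrete measure: your shift identity $\langle\lambda_n\rangle_V=\langle n\rangle_V$ is the same index shift that produces (\ref{23}) (indeed $\langle\lambda_n\rangle_V=Ve^{x-a/2V}\langle e^{-an/V}\rangle_V$ is (\ref{23}) in disguise), but you then close the argument with the correlation inequality $\mathrm{Cov}_V(n,\lambda_n)\le 0$ for oppositely monotone functions of a single variable, rather than by differentiating a functional equation. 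All your computations check out: $\lambda_n=Ve^{x-a/2V}e^{-an/V}$ is indeed decreasing, the two moment identities are correct, and the covariance representation of $u_V-u'_V$ is exact. Your version is somewhat more elementary and self-contained; what the paper's route buys is the identity (\ref{K14}) itself, which is not a throwaway step --- it is reused in the proof of Lemma \ref{2lm} (via (\ref{L25})) and in Corollary \ref{2co} (integration by parts in (\ref{K14})). So if one adopted your proof wholesale, (\ref{23}) and (\ref{K14}) would still have to be derived separately for those later arguments.
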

\begin{proof}
By taking the $x$-derivative in (\ref{K9}) we get
\begin{equation}
  \label{K15}
 u'_V (a, x) = \langle \left(n-\langle n \rangle_V\right)^2
 \rangle_V {/}V\geq 0,
\end{equation}
which proves the lower bound stated in (\ref{K12}). On the other
hand, by taking the $x$-derivative of both sides of (\ref{22}) we
obtain that $u_V$ satisfies, cf (\ref{8a}),
\begin{eqnarray}
  \label{23}
u_V (a,x) \exp\left( a\int_0^1 u_V \left( a, x- \frac{a}{V} t\right)
dt\right)= \exp\left( x -  \frac{a}{2V} \right).
\end{eqnarray}
Now we differentiate both sides of (\ref{23}) and obtain
\begin{gather}
  \label{K14}
 u'_V (a, x) + a u_V (a,x) \int_0^1 u'_V \left( x - \frac{a}{V} t\right)
 d  t = u_V (a,x).
\end{gather}
In view of (\ref{K15}), the second summand here is positive which
yields the upper bound in (\ref{K12}).
\end{proof}
Recall that $u(a,x)$ is defined in (\ref{8}).
\begin{corollary}
  \label{1co}
For each $x\in \mathds{R}$ and $V>0$, the following holds
\begin{equation}
  \label{K16}
  u\left(a, x- \frac{a}{2V} \right) \leq u_V (a,x) \leq u\left(a, x +\frac{a}{2V} \right)
  ,
\end{equation}
and hence
\begin{equation}
  \label{K16a}
  \left\vert u_V (a, x) - u(a,x)\right\vert \leq \frac{1}{2V}.
\end{equation}
\end{corollary}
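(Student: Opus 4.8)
The plan is to use the functional identity (\ref{23}) for $u_V$ in exactly the way one uses (\ref{8a})--(\ref{8b}) for $u$. Taking logarithms of (\ref{23}) one gets
\[
\ln u_V(a,x) + a\int_0^1 u_V\!\left(a, x - \tfrac aV t\right)dt = x - \tfrac a{2V},
\]
so, writing $I(x):=\int_0^1 u_V(a, x - \frac aV t)\,dt$ and recalling from (\ref{8b}) that $u(a,\cdot)$ is the inverse of the strictly increasing map $u\mapsto au + \ln u$, the bound (\ref{K16}) will follow once $I(x)$ is sandwiched suitably. First, by the lower bound in (\ref{K12}) the function $u_V(a,\cdot)$ is nondecreasing, and since $x - \frac aV \le x - \frac aV t \le x$ for $t\in[0,1]$ this gives $u_V(a, x - \frac aV) \le I(x) \le u_V(a,x)$. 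Inserting the right-hand inequality into the displayed logarithmic identity yields $au_V(a,x) + \ln u_V(a,x) \ge x - \frac a{2V}$, and inverting $u\mapsto au + \ln u$ produces the lower bound $u_V(a,x) \ge u(a, x - \frac a{2V})$ in (\ref{K16}).

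The heart of the matter is a sharp bound on the increment $u_V(a,x) - u_V(a, x - \frac aV)$. Here I would rewrite the integral appearing in (\ref{K14}) via the substitution $s = x - \frac aV t$ as $\int_0^1 u'_V(a, x - \frac aV t)\,dt = \frac Va\big(u_V(a,x) - u_V(a, x - \frac aV)\big)$, so that (\ref{K14}) becomes
\[
u'_V(a,x) + V\,u_V(a,x)\Big(u_V(a,x) - u_V\!\left(a, x - \tfrac aV\right)\Big) = u_V(a,x).
\]
Since $u'_V(a,x)\ge 0$ by (\ref{K12}) and $u_V(a,x)>0$ (e.g. from (\ref{23})), this immediately forces $u_V(a,x) - u_V(a, x - \frac aV) \le \frac 1V$. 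Combining this with the left-hand inequality of the sandwich above, $I(x) \ge u_V(a, x - \frac aV) \ge u_V(a,x) - \frac1V$, and feeding it back into the logarithmic identity gives $au_V(a,x) + \ln u_V(a,x) \le x - \frac a{2V} + \frac aV = x + \frac a{2V}$; inverting once more yields the upper bound $u_V(a,x) \le u(a, x + \frac a{2V})$ in (\ref{K16}).

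Finally, (\ref{K16a}) is routine given (\ref{K16}): by (\ref{8a}) one has $u'(a,\cdot) = u/(1+au) < 1/a$, so $u(a,\cdot)$ is Lipschitz with constant at most $1/a$; hence both $u(a, x + \frac a{2V}) - u(a,x)$ and $u(a,x) - u(a, x - \frac a{2V})$ are at most $\frac 1{2V}$, and since $u_V(a,x)$ and $u(a,x)$ both lie in the interval $\big[u(a, x - \frac a{2V}),\, u(a, x + \frac a{2V})\big]$ (the former by (\ref{K16}), the latter by monotonicity of $u(a,\cdot)$), their difference is at most $\frac 1{2V}$.

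I expect the only real obstacle to be the increment estimate in the second paragraph: the naive route of bounding $u'_V\le u_V$ directly inside $I(x)$ loses a stray factor of $u_V(a,x)$ and does not close, so one genuinely has to exploit the precise form of the functional equation (\ref{K14}) together with the positivity $u'_V\ge 0$. Everything else is bookkeeping with the monotone function $u\mapsto au+\ln u$ and its inverse $u(a,\cdot)$.
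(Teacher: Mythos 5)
Your proof is correct, and for the lower bound in (\ref{K16}) and the final Lipschitz step giving (\ref{K16a}) it coincides with the paper's argument: sandwich the integral $I(x)=\int_0^1 u_V(a,x-\frac{a}{V}t)\,dt$ using the monotonicity of $u_V(a,\cdot)$ from (\ref{K12}), and invert the increasing map $u\mapsto au+\ln u$. Where you diverge is the upper bound. The paper gets it without any increment estimate: from $I(x)\ge u_V(a,x-\frac{a}{V})$ and $u_V(a,x)\ge u_V(a,x-\frac{a}{V})$ one has
\[
u_V\!\left(a,x-\tfrac{a}{V}\right)\exp\!\left(a\,u_V\!\left(a,x-\tfrac{a}{V}\right)\right)\le u_V(a,x)\,e^{aI(x)}=e^{x-\frac{a}{2V}},
\]
hence $u_V(a,x-\frac{a}{V})\le u(a,x-\frac{a}{2V})$ by monotonicity of $u\mapsto ue^{au}$, and replacing $x$ by $x+\frac{a}{V}$ yields $u_V(a,x)\le u(a,x+\frac{a}{2V})$ directly. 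This shift-of-argument trick is the step you describe as "not closing" and circumvent instead by extracting the increment bound $u_V(a,x)-u_V(a,x-\frac{a}{V})\le \frac{1}{V}$ from (\ref{K14}) and feeding it back into the logarithmic identity; that computation is valid (the substitution $\int_0^1 u_V'(a,x-\frac{a}{V}t)\,dt=\frac{V}{a}[u_V(a,x)-u_V(a,x-\frac{a}{V})]$ is exact, and positivity of $u_V'$ and $u_V$ does the rest), just longer than necessary. One small wording issue at the end: two points lying in the interval $[u(a,x-\frac{a}{2V}),u(a,x+\frac{a}{2V})]$ are a priori only within $\frac{1}{V}$ of each other; the bound $\frac{1}{2V}$ in (\ref{K16a}) comes from comparing $u_V(a,x)$ to $u(a,x)$ on whichever side it lies, using that $u(a,x)$ is within $\frac{1}{2V}$ of each endpoint -- which is what your preceding sentence sets up, so the argument is fine, but state the case split explicitly.
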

\begin{proof}
By (\ref{K12}) $u_V\left(a, x \right)$ is an increasing function of
$x$, which by (\ref{23}) yields
\begin{gather}
  \label{K17}
u_V \left(a, x- \frac{a}{V} \right) \exp\left(a u_V \left(a, x-
\frac{a}{V} \right)  \right) \leq \exp\left(x - \frac{a}{2V}
\right), \\[.2cm] \nonumber
u_V \left(a, x \right) \exp\left(a u_V \left(a, x \right)  \right)
\geq \exp\left(x - \frac{a}{2V} \right).
\end{gather}
On the other hand, by (\ref{8a}) we have that
\[
\exp\left(x - \frac{a}{2V} \right) = u\left(a, x - \frac{a}{2V}
\right) \exp\left( a  u\left(a, x - \frac{a}{2V} \right) \right).
\]
Since the function $u \mapsto u e^{au}$ is increasing, the first
line in (\ref{K17}) implies that
\[
u_V \left(a, x- \frac{a}{V} \right) \leq u\left(a, x - \frac{a}{2V}
\right),
\]
which yields the upper bound in (\ref{K16}). The lower bound is
obtained from the second line in (\ref{K17}) analogously. Then the
estimate in (\ref{K16a}) follows by these bounds and the fact that
$u'(a,x) \leq 1/a$, see (\ref{8a}).
\end{proof}
\begin{lemma}
  \label{2lm}
For each $a>0$, there exists a continuous function $x\mapsto h_a(x)
>0$ such that, for all $V>0$, the following holds
\begin{equation}
 \label{K13}
\left\vert u''_V (a , x)\right\vert \leq h_a (x).
\end{equation}
\end{lemma}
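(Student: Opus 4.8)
The plan is to avoid the explicit series and work from the functional equation for $u_V$. Differentiating (\ref{K14}) once more in $x$ and then using (\ref{K14}) again to eliminate the first–order convolution integral produces the clean identity
\[
u''_V(a,x)=\frac{\bigl(u'_V(a,x)\bigr)^2}{u_V(a,x)}-a\,u_V(a,x)\int_0^1 u''_V\!\left(a,x-\tfrac{a}{V}t\right)dt.\qquad(\star)
\]
Two elementary bounds are already available: $0\le u'_V(a,x)\le u_V(a,x)$ by Lemma \ref{1lm}, and $u_V(a,x)\le e^{\,x-a/(2V)}<e^{x}$, which follows at once from (\ref{23}) because $\exp\!\bigl(a\int_0^1 u_V(a,x-\tfrac{a}{V}t)\,dt\bigr)\ge 1$. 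Hence the first term on the right of $(\star)$ satisfies $0\le (u'_V)^2/u_V\le u'_V\le u_V<e^{x}$, and the whole problem is reduced to controlling the \emph{sign} of $u''_V$.

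The key step is therefore to prove that $u_V(a,\cdot)$ is convex:
\[
u''_V(a,x)\ge 0\qquad\text{for all }x\in\mathds{R},\ V>0.\qquad(\dagger)
\]
Once $(\dagger)$ is known, the convolution term in $(\star)$ is nonnegative, so $(\star)$ yields $0\le u''_V(a,x)\le e^{x}$, and the lemma holds with $h_a(x)=e^{x}$ (any continuous positive majorant of $e^{x}$ would do as well). That $(\dagger)$ should hold is strongly suggested by the probabilistic meaning of $u''_V$: by (\ref{K9}) and (\ref{K15}), $V u''_V(a,x)=\mu_3(V,x)$ is the third central moment of the probability law $\pi_V(x,\cdot)$ on $\mathds{N}_0$, which is strictly log-concave since $\pi_V(x,n-1)\pi_V(x,n+1)/\pi_V(x,n)^2=\tfrac{n}{n+1}e^{-a/V}<1$; moreover, dropping the Gaussian weight reduces $\pi_V(x,\cdot)$ to a Poisson law, for which $\mu_3=Ve^{x}>0$, and in the `no-convolution' limit (\ref{23}) degenerates to $\ln u+au=x-\tfrac{a}{2V}$, whose solution $u(a,x-\tfrac{a}{2V})$ is manifestly convex with $u''=u/(1+au)^3>0$.

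I would establish $(\dagger)$ by a Laplace/saddle-point analysis of the sums in (\ref{22}) and (\ref{K9}) that is \emph{uniform in $V$}: when the variance $\mu_2(V,x)=Vu'_V(a,x)$ of $\pi_V(x,\cdot)$ is large, the expansion gives $\mu_3(V,x)=\dfrac{Vu(a,x)}{(1+au(a,x))^{3}}\,\bigl(1+o(1)\bigr)>0$, while in the complementary regime (where $Ve^{x}$, hence $\mu_2$, stays bounded) the law $\pi_V(x,\cdot)$ is uniformly close to $\mathrm{Poi}(Ve^{x})$, so $\mu_3$ differs from $Ve^{x}$ by a controlled amount and stays positive. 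The routine parts are the comparison estimates between $u_V$, $u$ and their derivatives (of the same flavour as Corollary \ref{1co}, extractable from $(\star)$, (\ref{K14}) and (\ref{23}) by monotonicity of $u\mapsto ue^{au}$); the delicate point, and the main obstacle, is to make the saddle-point bound for $\mu_3$ hold with constants independent of $V$, the trouble spot being small $V$, where the shift $a/V$ in the convolution kernel of $(\star)$ is large and the Poisson comparison must be carried out carefully. As an alternative to the asymptotics, one may try to deduce $(\dagger)$ from $(\star)$ directly: $u''_V>0$ holds near $x=-\infty$ (there $\pi_V(x,\cdot)$ is dominated by its mass at $n=0,1$ and behaves like a small–parameter Poisson law), and, writing $(\star)$ as $u''_V+\mathcal{L}u''_V=A_0$ with $A_0\ge 0$ and $\mathcal{L}$ a positivity-preserving (multiply-and-average) operator, one propagates nonnegativity to all $x$; the step needing care in this variant is the `first contact' argument, since a zero of $u''_V$ at a point $x_1$ constrains $u''_V$ on the whole interval $[x_1-a/V,\,x_1]$ rather than only at $x_1$.
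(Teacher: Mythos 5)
Your identity $(\star)$ is correct (it is exactly what one obtains by differentiating (\ref{K14}) and eliminating the first-order convolution via (\ref{L25}); the paper derives the same equation), and so is the reduction: if the convolution term $a u_V\int_0^1 u''_V(a,x-\frac{a}{V}t)\,dt$ were nonnegative, then $0\le u''_V\le (u'_V)^2/u_V\le u_V\le e^{x}$ would follow. The fatal problem is the key claim $(\dagger)$: $u_V(a,\cdot)$ is \emph{not} convex, i.e.\ the mean-centred third moment of $\pi_V(x,\cdot)$ can be negative. Take $a/V$ large and $x$ chosen so that $Ve^{x-a/(2V)}=2$. Then $\pi_V(x,1)/\pi_V(x,0)=2$, while $\pi_V(x,2)/\pi_V(x,1)=\frac{V}{2}e^{x-3a/(2V)}=e^{-a/V}$ is negligible, so $\pi_V(x,\cdot)$ is, up to an exponentially small error, the Bernoulli law with $p=2/3$, whose third central moment is $p(1-p)(1-2p)=-2/27<0$; hence $u''_V(a,x)<0$ there. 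For fixed $a$ this occurs for small $V$, which lies inside the lemma's stated range $V>0$ (and your $(\dagger)$ is asserted for all $V$). Your heuristic that log-concavity of $\pi_V$ suggests nonnegative skewness is also unreliable: log-concave laws can be negatively skewed (a binomial with $p>1/2$ is the standard example, of which the Bernoulli above is a degenerate instance). The asymptotic $\mu_3\sim Vu/(1+au)^3>0$ is correct as $V\to+\infty$ at fixed $x$, but it cannot be upgraded to a sign statement uniform in $V$, precisely because the statement is false; so both of your proposed routes to $(\dagger)$ (uniform saddle point, propagation of positivity through $\mathcal{L}$) are attempts to prove something that does not hold.

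The paper keeps your identity and your ``nonnegative part plus controlled part'' strategy, but applies it to a different nonnegative part: it recentres the third moment at the half-integer-shifted \emph{mode} $n_*$ rather than at the mean. Unimodality of $\pi_V(x,\cdot)$, established via (\ref{K15b})--(\ref{K15e}), gives $g_V:=\frac{1}{V}\langle(n-n_*+\frac12)^3\rangle_V\ge 0$ by pairing the terms at $n_*+m$ and $n_*-m-1$ as in (\ref{27}); the discrepancy $u''_V-g_V$ is controlled through $|n_*-\langle n\rangle_V|\le 1$ together with Lemma \ref{1lm} and Corollary \ref{1co}; and then the identity $(\star)$, applied to the decomposition $u''_V=g_V+(u''_V-g_V)$, bounds $g_V$ from above. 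That recentring is the missing idea needed to repair your argument: the sign you want is true for the mode-centred third moment, not for the mean-centred one.
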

\begin{proof}
Similarly as in (\ref{K15}) we get
\begin{equation}
  \label{K15a}
 u''_V (a, x) = \langle \left(n-\langle n \rangle_V\right)^3
 \rangle_V {/}V.
\end{equation}
However, unlike to (\ref{K15}) we have no information on the sign of
this derivative. The idea of proving (\ref{K13}) is to split $u''_V
(a, x)$ into two parts, one of which is positive and the other one
is controllable. Then the first part can be controlled similarly as
in Lemma \ref{1lm}. To this end we use a certain property of the
probability distribution defined in the second line of (\ref{K9}).
Namely, we want to find its modes: all those $n_*$ that satisfy the
conditions
\begin{equation}
  \label{K15b}
 \frac{\pi_V (a,n_* \pm 1)}{\pi_V(a,n_*)} \leq 1.
\end{equation}
By taking `minus' in (\ref{K15b}) we obtain from (\ref{K9}) and
(\ref{8a}) that
\begin{equation}
  \label{K15c}
  \frac{n_*}{V} \exp\left( a \frac{n_*}{V}\right) \leq \exp\left( x+
  \frac{a}{2V}\right) = u\left( x+
  \frac{a}{2V}\right) \exp\left( a u \left( x+
  \frac{a}{2V}\right)\right).
\end{equation}
Likewise, by taking `plus' in (\ref{K15b}) we get
\begin{equation}
  \label{K15d}
  \frac{n_* +1}{V} \exp\left( a \frac{n_*+1}{V}\right) \geq \exp\left( x+
  \frac{a}{2V}\right).
\end{equation}
Since the function $u \mapsto ue^{au}$ is increasing on
$(0,+\infty)$, the inequalities and equalities in (\ref{K15c}) and
(\ref{K15d}) imply that
\begin{equation}
  \label{K15e}
V u \left( x+
  \frac{a}{2V}\right) -1 \leq n_* \leq  V u \left( x+
  \frac{a}{2V}\right),
\end{equation}
and hence the probability distribution defined in the second line of
(\ref{K9}) is unimodal. By (\ref{K9}) we have that $\langle n
\rangle_V = V u_V (a, x)$. Then we use the estimates in (\ref{K16})
and obtain from (\ref{K15e}) the following
\begin{gather*}
 n_* - \langle n \rangle_V \leq V \left[ u \left( x+
  \frac{a}{2V}\right) - u \left( x-
  \frac{a}{2V}\right)\right], \\[.2cm] = \frac{a}{2} \int_{-1}^1 u' \left(a, x + \frac{a}{2V}t \right) dt \leq 1 , \nonumber
\end{gather*}
where we used the estimate $a u' (a,x) \leq 1$ which readily follows
from the second line in (\ref{8a}). On the other hand, also by the
estimates in (\ref{K16}) we get that $\langle n \rangle_V  - n_*
\leq 1 $, which finally yields
\begin{equation}
  \label{L22}
\left\vert n_* - \langle n \rangle_V \right\vert\leq 1,
\end{equation}
holding for all $x \in \mathds{R}$ and $V>0$. Now keeping in mind
(\ref{K15a}) we write
\begin{eqnarray}
  \label{24}
\langle \left( n - \langle n \rangle_V \right)^3\rangle_V & = &
\langle \left( n - n_* + \frac{1}{2}\right)^3\rangle_V + 3 \left(
n_* - \frac{1}{2} - \langle n \rangle_V \right)
\\[.2cm] \nonumber & \times & \left[ \langle \left( n - \langle n \rangle_V\right)^2\rangle_V + \left( n_* - \frac{1}{2} - \langle n \rangle_V \right)^2\right]
- 2 \left( n_* - \frac{1}{2} - \langle n \rangle_V \right)^3  \\[.2cm] \nonumber & = &   \langle \left( n - n_* +      \frac{1}{2}\right)^3\rangle_V +
3 \left( n_* - \frac{1}{2} - \langle n \rangle_V \right)  \langle \left( n - \langle n \rangle_V\right)^2\rangle_V   \\[.2cm] \nonumber
& + & \left( n_* - \frac{1}{2} - \langle n \rangle_V \right)^3.
\end{eqnarray}
Set
\begin{equation}
\label{L23} g_V (a,x) = \frac{1}{V} \langle \left( n - n_* +
\frac{1}{2}\right)^3\rangle_V.
\end{equation}
Then by (\ref{24}) and  (\ref{L22}) we have that
\begin{gather}
  \label{L24}
 \left\vert u''_V (a,x) - g_V (a,x) \right\vert \leq \frac{9}{2}
 \left[ u'_V (a , x) + \frac{3}{4V}\right] \\[.2cm] \nonumber \leq \frac{9}{2}
 \left[ u \left(a , x+ \frac{a}{2V_0}\right) + \frac{3}{4V_0}\right] =: \chi(x),
\end{gather}
where we assume that $V\geq V_0$ for some fixed $V_0$ and use the
upper bounds in (\ref{K12}) and (\ref{K16}). To estimate $g_V$ we
write
\begin{eqnarray}
  \label{27}
\langle \left( n - n_* +      \frac{1}{2}\right)^3\rangle_V & = &
\sum_{n=0}^\infty \left( n - n_* +      \frac{1}{2}\right)^3 \pi_V (x,n) \\[.2cm] \nonumber &
\geq & \sum_{m=0}^{n_*-1} \left(m +\frac{1}{2} \right)\left(\pi_V
(x,n_* +m) - \pi_V (x,n_*-m-1) \right).
\end{eqnarray}
By the second line in (\ref{K9}) we have
\begin{eqnarray*}
& & \frac{\pi_V (x,n_* +m)}{\pi_V(x,n_* -m -1)} \\[.2cm]
& & \qquad \qquad = \frac{V^{2m+1}\exp\left((2m+1)x -
\frac{a}{2V}\left[ (n_* +m)^2 - (n_*-m-1)^2\right]\right)}
{(n_*+m) (n_*+m-1)\cdots n_* \cdots (n_*-m+1) (n_*-m)} \\[.2cm] & & \qquad \qquad  =
\left[ \left( \frac{V}{n_*}\right) \exp\left(x - a \frac{n_*}{V} +
\frac{a}{2V} \right)\right]^{2m+1} \bigg{/} \prod_{k=1}^m \left(1 -
\left( \frac{k}{n_*} \right)^2\right) \geq 1,
\end{eqnarray*}
where the latter estimate follows by the inequality in (\ref{K15c}).
Then by (\ref{27}) and (\ref{L23}) we conclude that, for all $x\in
\mathds{R}$ and $V\geq V_0$,
\begin{equation}
  \label{28}
g_V (x) \geq 0.
\end{equation}
By (\ref{K14}) we get
\begin{equation}
  \label{L25}
a \int_0^1 u'_V \left( a, x - \frac{a}{V} t \right) dt = 1 -
\frac{u'_V (a, x)}{u_V (a,x)}.
\end{equation}
Now we take the $x$-derivative of both sides of (\ref{K14}), use
(\ref{L25}) and obtain
\begin{equation*}
u''_V ( a, x) + a u_V (a,x) \int_0^1 u''_V\left(a, x - \frac{a}{V} t
\right) dt = \frac{\left[u'_V (a, x)\right]^2}{u_V (a,x)} \leq
u\left(a, x + \frac{a}{2V}\right),
\end{equation*}
where we also use the upper bounds in (\ref{K12}) and (\ref{K16}).
We write here $u''_V = g_V + ( u''_V - g_V)$, use the estimate
obtained in (\ref{L24}) and the positivity in (\ref{28}). This
yields
\begin{gather}
\label{L27} g_V (a,x) \leq g_V (a,x) + a u_V (a,x) \int_0^1 g_V
\left(a,x-
\frac{a}{V}t\right) dt \\[.2cm] \nonumber \leq u\left(a, x +
\frac{a}{2V_0}\right) + \chi(x) + a u\left(a, x +
\frac{a}{2V_0}\right) \int_0^1 \chi \left( x - \frac{a}{V_0} t
\right) dt \\[.2cm] \nonumber
\leq \chi(x) + u\left(a, x + \frac{a}{2V_0}\right) \left( 1 + a
\chi(x) \right),
\end{gather}
where we also use that $\chi$ is an increasing function, see
(\ref{L24}) and (\ref{8a}). Thus, by the latter and (\ref{L24}) we
conclude that the estimate stated in (\ref{K13}) holds true with
$h_a= \chi +{\rm RHS}(\ref{L27})$.
\end{proof}
\begin{corollary}
  \label{2co}
In the limit $V\to +\infty$, we have that $u'_V \to u'$ given in
(\ref{8a}), point-wise in $a$ and uniformly on compact subsets of
$\mathds{R}$ in $x$.
\end{corollary}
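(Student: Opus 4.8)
The plan is to pass to the limit $V\to+\infty$ in the identity (\ref{K14}) and to use the a priori bounds obtained in Lemmas \ref{1lm} and \ref{2lm} together with a compactness argument to identify the limit. Throughout I fix $a>0$ (so point-wiseness in $a$ is automatic) and a compact interval $K\subset\mathds{R}$.

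First I would check that the family $\{u'_V(a,\cdot)\}_{V}$ is precompact in $C(K)$ for $V$ large. Uniform boundedness is immediate: by the upper bound in (\ref{K12}) and the right inequality in (\ref{K16}), $0\le u'_V(a,x)\le u_V(a,x)\le u(a,x+a/(2V_0))$ for $V\ge V_0$ and $x\in K$, and the right-hand side is continuous, hence bounded on $K$. Equicontinuity follows from Lemma \ref{2lm}: since $|u''_V(a,x)|\le h_a(x)$ with $h_a$ continuous (hence bounded on any compact), the functions $u'_V(a,\cdot)$ are uniformly Lipschitz on $K$. By Arzel\`a--Ascoli, every sequence $V_k\to+\infty$ has a subsequence along which $u'_{V_k}(a,\cdot)$ converges, uniformly on compact subsets of $\mathds{R}$, to some continuous $g(a,\cdot)$.

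Next I would pass to the limit along such a subsequence in (\ref{K14}). By Corollary \ref{1co}, $u_V(a,x)\to u(a,x)$ uniformly on $K$ (indeed $|u_V-u|\le 1/(2V)$). For the integral term, I split $u'_{V_k}(a,x-\tfrac{a}{V_k}t)=[u'_{V_k}(a,x-\tfrac{a}{V_k}t)-g(a,x-\tfrac{a}{V_k}t)]+g(a,x-\tfrac{a}{V_k}t)$: the bracket tends to $0$ uniformly for $x\in K$, $t\in[0,1]$ by the uniform convergence on a slightly enlarged compact, while $g(a,x-\tfrac{a}{V_k}t)\to g(a,x)$ uniformly by continuity of $g$. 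Hence $\int_0^1 u'_{V_k}(a,x-\tfrac{a}{V_k}t)\,dt\to g(a,x)$, and (\ref{K14}) gives in the limit
\[
g(a,x)+a\,u(a,x)\,g(a,x)=u(a,x),\qquad x\in\mathds{R},
\]
so that $g(a,x)=u(a,x)/(1+a\,u(a,x))=u'(a,x)$ by the second line of (\ref{8a}).

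Since the limit $g=u'$ does not depend on the subsequence, a standard subsequence argument upgrades this to convergence of the whole family: $u'_V(a,\cdot)\to u'(a,\cdot)$ uniformly on compact subsets of $\mathds{R}$ as $V\to+\infty$. The only mildly delicate point is the treatment of the vanishing shift $\tfrac{a}{V}t$ inside the integral in (\ref{K14}); this is handled by combining uniform convergence on an enlarged compact with the continuity of the limit, and everything else is the routine Arzel\`a--Ascoli / uniqueness-of-limit scheme.
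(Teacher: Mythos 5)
Your proof is correct, but it takes a genuinely different route from the paper. You run a soft compactness argument: uniform boundedness from (\ref{K12}) and (\ref{K16}), equicontinuity from the bound $|u''_V|\le h_a$ of Lemma \ref{2lm}, Arzel\`a--Ascoli, and then identification of every subsequential limit $g$ by passing to the limit in the identity (\ref{K14}), which forces $g(1+au)=u$ and hence $g=u'$; uniqueness of the limit then upgrades subsequential to full convergence. The paper instead integrates by parts in (\ref{K14}) to obtain the closed-form identity
\begin{equation*}
u'_V (a, x) = \frac{u_V (a,x)}{1+ a u_V (a,x)} \left[ 1 +
\frac{a}{V} \int_0^1 (1-t)\, u_V'' \left( a, x- \frac{a}{V} t\right)
dt\right],
\end{equation*}
after which the bracket is $1+O(1/V)$ uniformly on compacts by (\ref{K13}) and the prefactor converges to $u/(1+au)=u'$ by (\ref{K16a}). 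Both arguments consume exactly the same a priori inputs (the functional equation (\ref{K14}) and the bounds of Lemmas \ref{1lm}, \ref{2lm} and Corollary \ref{1co}); the paper's version is shorter and quantitative, yielding an explicit $O(1/V)$ rate uniformly on compacts, whereas yours is purely qualitative but equally valid, with the only mildly nonroutine points (the vanishing shift $\tfrac{a}{V}t$ inside the integral, and the diagonal extraction over an exhaustion of $\mathds{R}$ by compacts) handled correctly.
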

\begin{proof}
We integrate by parts in (\ref{K14}) and obtain therefrom that
\begin{equation*}
u'_V (a, x) = \frac{u_V (a,x)}{1+ a u_V (a,x)} \left[ 1 +
\frac{a}{V} \int_0^1 (1-t) u_V'' \left( a, x- \frac{a}{V} t\right)
dt\right].
\end{equation*}
Then the proof follows by (\ref{K16a}), (\ref{K13}) and the fact
that $u'(a,x) = u(a,x) /( 1 + au(x,a))$, see (\ref{8a}).
\end{proof}
By (\ref{22}) we have that $f_V(a,x) \leq e^x$ and hence $f_V(a,x)
\to 0$ as $x\to -\infty$. By (\ref{K9}) this yields
\begin{equation*}
  f_V (a,x) = \int_{-\infty}^x u_V (a, y) dy,
\end{equation*}
which by (\ref{K16}) leads to
\begin{equation}
  \label{L30}
f \left(a, x - \frac{a}{2V}\right) \leq f_V (a, x) \leq f \left(a, x
+ \frac{a}{2V}\right).
\end{equation}
Then for $V\geq V_0$, we have that
\begin{equation}
  \label{L31}
  \left\vert f_V (a,x) - f(a,x) \right\vert \leq  \frac{a}{2V} u\left(a,
  x+ \frac{a}{2V_0}
  \right).
\end{equation}
\begin{lemma}
  \label{4lm}
For each $a>0$, we have that $E_V\to E$ as $V\to +\infty$ uniformly
on compact subsets of $\mathds{R}$. We also have that
\begin{eqnarray}
  \label{L32}
& & E'_V (y) \to E'(y)= u(a,\mu_0 +y) - u(a, \mu_1 -y) - \frac{y}{a}, \\[.2cm] \nonumber
& & E''_V (y) \to E''(y)= \frac{u(a,\mu_0 +y)}{1 + a u(a,\mu_0 +y)}
+ \frac{u(a,\mu_1 -y)}{1 + a u(a,\mu_1 -y)} - \frac{1}{a},
\end{eqnarray}
where the convergence of the first (resp. second) derivatives is
uniform (resp. uniform on compact subsets) in $y$.
\end{lemma}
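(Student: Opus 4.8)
The plan is to derive each of the three convergence statements from the corresponding facts about $u_V$ established in Corollaries \ref{1co} and \ref{2co} and the estimate \eqref{L31}, using the structure $E_V(y) = f_V(a,\mu_0+y) + f_V(a,\mu_1-y) - y^2/(2a)$ and the analogous formula \eqref{10} for $E$. The $y^2/(2a)$ term is identical in $E_V$ and $E$ and drops out of every difference, so the whole matter reduces to controlling $f_V - f$, $u_V - u$ (since $E'_V(y) = u_V(a,\mu_0+y) - u_V(a,\mu_1-y) - y/a$ by \eqref{K9} and the chain rule), and $u'_V - u'$ (since $E''_V(y) = u'_V(a,\mu_0+y) + u'_V(a,\mu_1-y) - 1/a$).

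First I would handle $E_V \to E$: for $y$ in a compact set $K$, the arguments $\mu_0 + y$ and $\mu_1 - y$ range over compact sets, so by \eqref{L31} we get $|f_V(a,\mu_0+y) - f(a,\mu_0+y)| \le \frac{a}{2V} u(a, \mu_0 + y + a/(2V_0))$, which is bounded uniformly on $K$ by a constant times $1/V$; the same for the second term. Hence $\sup_{y\in K} |E_V(y) - E(y)| \to 0$. Second, for the first derivatives: $E'_V(y) - E'(y) = [u_V(a,\mu_0+y) - u(a,\mu_0+y)] - [u_V(a,\mu_1-y) - u(a,\mu_1-y)]$, and by \eqref{K16a} each bracket is bounded by $1/(2V)$ \emph{for every} $x\in\mathds{R}$; this gives the uniform (not merely locally uniform) convergence asserted for $E'_V$. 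Third, for the second derivatives: $E''_V(y) - E''(y) = [u'_V(a,\mu_0+y) - u'(a,\mu_0+y)] + [u'_V(a,\mu_1-y) - u'(a,\mu_1-y)]$, and Corollary \ref{2co} gives $u'_V \to u'$ uniformly on compact subsets of $\mathds{R}$ in $x$; as above, $y$ ranging over a compact set forces $\mu_0+y$, $\mu_1-y$ into compact sets, so $E''_V \to E''$ uniformly on compact subsets of $\mathds{R}$ in $y$. Finally I would note that the explicit right-hand-side formulas claimed for $E'(y)$ and $E''(y)$ are exactly what one gets by differentiating \eqref{10}--\eqref{11} and using $u'(a,x) = u(a,x)/(1 + au(a,x))$ from the second line of \eqref{8a}.

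I do not expect any real obstacle here: the lemma is a packaging of Corollary \ref{1co}, Corollary \ref{2co}, and the bound \eqref{L31}, together with the trivial observation that a continuous change of variables $y\mapsto \mu_i\pm y$ maps compacts to compacts. The only point requiring the slightest care is bookkeeping the distinction between the two modes of convergence — global uniformity for $E'_V$ (inherited from the \emph{global} bound \eqref{K16a} on $u_V - u$) versus merely local uniformity for $E_V$ and $E''_V$ (since \eqref{L31} and Corollary \ref{2co} only give control on compact sets, the former because $u(a, x + a/(2V_0))$ is unbounded as $x\to+\infty$, the latter because it is only asserted there). So the write-up is short: state the three identities for the differences, quote the three cited estimates, and invoke compactness of the image sets.
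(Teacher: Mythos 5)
Your proof is correct and follows essentially the same route as the paper's: the paper also deduces $E_V\to E$ locally uniformly from \eqref{L31}, the uniform convergence $E_V'\to E'$ from the global bound \eqref{K16a}, and the locally uniform convergence of the second derivatives from Corollary \ref{2co}. Your write-up merely makes explicit the difference identities and the compactness bookkeeping that the paper leaves implicit.
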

\begin{proof}
The convergence $E_V \to E$ follows by (\ref{L31}) and the fact that
$f'(a,x) = u(a,x)$ is bounded in $x$ on compact subsets of
$\mathds{R}$. The uniform in $y$ convergence $E'_V (y) \to E'(y)$
follows by (\ref{K16a}); the convergence of the second derivatives
follows by Corollary \ref{2co}.
\end{proof}
\begin{lemma}
  \label{5lm}
Assume that $(a,\mu_0, \mu_1) \in \mathcal{R}$, and hence the
function $E$ defined in (\ref{10}) has a unique non-degenerate
global maximum at the corresponding $y_*\in \mathds{R}$, see
Proposition \ref{0lm}. Then there exist $V_0>0$, $\varepsilon>0$ and
$y_{\pm}$ such that $y_{-} < y_{+}$, $y_* \in [y_{-} , y_{+}]$ and
for all $V>V_0$ the following holds:
\begin{itemize}
  \item[(i)] the function $E_V$ defined in (\ref{21}) has
also a unique global maximum at some $y_{*,V}\in [y_{-} , y_{+}]$;
\item[(ii)] $- E''_V (y) \geq \varepsilon$ for all $y \in [y_{-} ,
y_{+}]$;
\item[(iii)] $y_{*,V} \to y_*$ as $V\to +\infty$.
\end{itemize}
\end{lemma}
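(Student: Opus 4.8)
The plan is to work locally near the point $y_*$ of the unique non-degenerate global maximum of $E$, transfer the strict concavity there to $E_V$ by means of the uniform convergence established in Lemma \ref{4lm}, and then show that, for large $V$, the global maximum of $E_V$ over all of $\mathds{R}$ cannot leave the chosen neighbourhood of $y_*$. The latter exclusion has two parts: ruling out large $|y|$, which requires an upper bound on $E_V$ that is uniform in $V$ and will be the main technical point, and ruling out the remaining bounded region away from $y_*$, which uses that $E$ has a \emph{unique} global maximum.

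\emph{Localization and item (ii).} Since $E$ is $C^\infty$ and $E''(y_*)<0$ by the assumption $(a,\mu_0,\mu_1)\in\mathcal{R}$ together with Proposition \ref{0lm}, I would pick $\delta>0$ and $\varepsilon>0$ with $E''(y)\le -2\varepsilon$ on $I:=[y_{-},y_{+}]$, where $y_{\pm}=y_*\pm\delta$. On $I$ the function $E$ is strictly concave, so $E'(y_{-})>0>E'(y_{+})$ and $y_*$ is the unique maximiser of $E|_I$. By Lemma \ref{4lm}, $E_V''\to E''$ uniformly on $I$ and $E_V'\to E'$ uniformly on $\mathds{R}$; hence there is $V_1$ such that for every $V>V_1$ one has $-E_V''(y)\ge\varepsilon$ on $I$ — this is (ii) — and also $E_V'(y_{-})>0>E_V'(y_{+})$. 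Strict concavity of $E_V$ on $I$ then gives a single critical point $y_{*,V}\in(y_{-},y_{+})$, which is the unique maximiser of $E_V|_I$; in particular $E_V(y_{*,V})\ge E_V(y_*)\to E(y_*)$.

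\emph{No escape to infinity, and item (i).} By the sandwich \eqref{L30}, for $V\ge V_0$ and all $x$ one has $f_V(a,x)\le f(a,x+a/(2V_0))$ (using that $f(a,\cdot)$ is increasing), whence $E_V(y)\le\widetilde{E}(y):=f(a,\mu_0+a/(2V_0)+y)+f(a,\mu_1+a/(2V_0)-y)-y^2/(2a)$ for all such $V$ and $y$. The function $\widetilde{E}$ is of the form \eqref{10} with shifted chemical potentials, so the argument proving \eqref{eE} yields $\widetilde{E}(y)\to-\infty$ as $|y|\to+\infty$, uniformly in $V$. Thus there is $R>\max\{|y_{-}|,|y_{+}|\}$ with $\widetilde{E}(y)<E(y_*)-1$ for $|y|\ge R$, so for $V$ large $\sup_{|y|\ge R}E_V(y)<E(y_*)-1<E(y_*)-1/2\le E_V(y_{*,V})$. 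On the other hand $K:=[-R,R]\setminus(y_{-},y_{+})$ is compact and does not contain $y_*$, which by Proposition \ref{0lm} is the \emph{unique} global maximiser of $E$; hence $m:=\max_K E<E(y_*)$, and by the uniform convergence $E_V\to E$ on $[-R,R]$ (Lemma \ref{4lm}) we get $\sup_K E_V<(m+E(y_*))/2<E_V(y_{*,V})$ for $V$ large. Combining the two estimates, the global maximum of $E_V$ over $\mathds{R}$ is, for large $V$, attained inside $(y_{-},y_{+})$, hence coincides with the unique critical point $y_{*,V}$ there and is unique; this proves (i).

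\emph{Item (iii) and the main obstacle.} The whole argument above is valid with any $\delta'\in(0,\delta]$ in place of $\delta$, so running it with such a $\delta'$ shows $y_{*,V}\in[y_*-\delta',y_*+\delta']$ for all $V$ large enough; letting $\delta'\downarrow 0$ gives $y_{*,V}\to y_*$. (Alternatively: any limit point $\bar y\in I$ of $(y_{*,V})$ satisfies $E'(\bar y)=0$ by $E_V'(y_{*,V})=0$ and the uniform convergence $E_V'\to E'$, and $\bar y=y_*$ by strict concavity on $I$.) I expect the crux to be the uniform-in-$V$ tail bound: the pointwise estimate \eqref{L31} is of no use for large $|y|$, since the bound there grows with $x$, and the right device is the two-sided inequality \eqref{L30}, which reduces the behaviour of $E_V$ at infinity to that of one fixed $E$-type function independent of $V$.
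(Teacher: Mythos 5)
Your proof is correct and follows essentially the same route as the paper's: both localize around $y_*$ via the non-degeneracy $E''(y_*)<0$ together with the uniform convergence of Lemma \ref{4lm} (your strict concavity of $E_V$ on $I$ with a sign change of $E_V'$ at the endpoints is exactly the paper's condition $1-w_V'(y)\ge \varepsilon a$ with $w_V(y_-)>y_-$, $w_V(y_+)<y_+$, since $aE_V''=w_V'-1$ and $aE_V'(y)=w_V(y)-y$), and both control the tails through the $V$-independent majorant supplied by (\ref{L30}) and the argument for (\ref{eE}). The only differences are presentational: you make the exclusion of the bounded region $[-R,R]\setminus(y_-,y_+)$ explicit via compactness and uniqueness of the global maximizer of $E$, where the paper leaves this step largely implicit, while the paper's contraction estimate (\ref{L34}) additionally yields the quantitative rate $|y_{*,V}-y_*|=O(1/V)$ for claim (iii), which your dilation (or limit-point) argument does not provide but which the lemma does not require.
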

\begin{proof}
We begin by recalling that the assumptions imposed on $E$ imply that
$E''(y_*) < 0$. Set, cf (\ref{E7}) and (\ref{L32}),
\begin{equation}
  \label{L32a}
 w (y) = a u(a,\mu_0 +y) - au(a,\mu_1-y) = a E'(y) +y.
\end{equation}
Then, cf (\ref{E4}),
\begin{eqnarray}
  \label{L32b}
  w'(y) - 1 & = & \frac{c(y)}{[1 + a u(a, \mu_0 +y)][1 + a u(a,
  \mu_1-y)]}, \\[.2cm] \nonumber
c(y) & := & a^2 u(a,\mu_0 +y) u(a,\mu_1-y)-1.
\end{eqnarray}
By (\ref{8a}) we get
\begin{equation}
  \label{L32c}
 c'(y) = \frac{a^3 u(a,\mu_0 +y) u(a,\mu_1-y)}{[1 + a u(a, \mu_0 +y)][1 + a u(a,
  \mu_1-)]}\left[u(a,\mu_1-y) - u(a,\mu_0+y) \right],
\end{equation}
and hence $c'(y)=0$ at $y = - (\mu_0-\mu_1)/2$, where $c$ has
maximum. since $E''(y_*)<0$, we have that $w'(y_*) <1$ (see
(\ref{L32a})), and hence
\begin{equation}
  \label{L33}
a^2 u(a,\mu_0 +y_*) u(a,\mu_1-y_*)-1 <0.
\end{equation}
Set,
\begin{equation}
  \label{L33a}
  w_V (y) = a u_V (a, \mu_0 +y) - a u_V (a,\mu_1 - y).
\end{equation}
By (\ref{K16a}) (resp. Corollary \ref{2co}) it follows that $w_V \to
w$ (resp. $w'_V \to w'$) as $V\to +\infty$, point-wise in $a$ and
uniformly in $y$ (resp. uniformly in $y$ on compact subsets of
$\mathds{R}$).

As above, we assume that $\mu_0\geq \mu_1$. For $\mu_0 < \mu_c = 1 -
\ln a$, we have that  $w'(y) < 1$ for all $y\in \mathds{R}$, see
Fig. \ref{F1}, and hence $w (y) < y$ for all $y>y_*$, and $w (y) >
y$ for all $y<y_*$. Fix any $y_{\pm}$ such that $y_{-} <y_{+}$ and
$y_* \in [y_{-}, y_{+}]$, then pick positive $V_0$ and $\varepsilon$
such that, for all $V>V_0$, the following holds: (a) $w_V (y_{+}) <
y_{+}$, $w_V (y_{-}) > y_{-}$; (b)  $1 - w_V'(y) \geq \varepsilon a$
for all $y \in [y_{-}, y_{+}]$. This is possible in view of the
convergence just mentioned. By (a) we then have that there exists a
unique $y_{*,V}\in [y_{-}, y_{+}]$ such that $w_V (y_{*,V})=
y_{*,V}$ which is an extremum point of $E_V$. In view of the
convergence stated in Lemma \ref{4lm}, this is the point of
non-degenerate global maximum. By (b) we have that (ii) holds true.
Thus, it remains to prove the validity of claim (iii) in this case.
By the very definition of $y_*$ and $y_{*,V}$ we have that $y_* -
y_{*,V} = w(y_{*} ) - w_V (y_{*,V})$. Then
\begin{gather}
  \label{L34}
|y_* - y_{*,V}| \leq |w(y_*) - w_V(y_*)| + |w_V (y_*) - w_V(
y_{*,V})| \\[.2cm] \nonumber \leq \frac{a}{V} + (1-a \varepsilon) |y_* -
y_{*,V}|,
\end{gather}
which yields $|y_* - y_{*,V}| < 1/V\varepsilon$. Here we have taken
in to account (\ref{K16a}) and the fact that $y_*, y_{*,V}\in
[y_{-}, y_{+}]$. Let us now consider the case $\mu_0> \mu_c$. Set
$\delta = \mu_0 - \mu_1$ and $\xi (\delta)= (\mu_0+\mu_1)/2 + \ln a
= \mu_0 - \mu_c + 1 + \delta/2$,  $\eta (\delta) = \delta/2$. Let
$\delta_*>0$ be defined by the condition that $\xi (\delta_*)$ and
$\eta (\delta_*)$ satisfy (\ref{xi}). For $\delta \geq \delta_*$,
$E$ has a single non-degenerate global maximum, and the proof of the
lemma is the same as in the case of $\mu_0 < \mu_c$. Thus, we ought
to consider the case $\delta \in (0,\delta_*)$ where $E$ has two
local maxima, say at $y_1$ and $y_*>y_1$, and one local minimum at
$y_0 \in [y_1, y_*]$, see Fig. \ref{F1} and Lemma \ref{0lm}. For
$\mu_0 > \mu_1$, $y_*$ is the point of non-degenerate global maximum
of $E$. Since $c(y)$ defined in (\ref{L32b}) is continuous, by
(\ref{L33}) it follows that there exists $y_{-} \in(0, y_*)$ such
that $c(y_{-})<0$ and $w(y_{-}) > y_{-}$. Note that $c(-\delta
/2)>0$ for $\mu_0 > \mu_c$. Set $2\varkappa = w(y_{-}) - y_{-}$ and
then pick $y_{+} > y_{*}$ such that $y_{+} - w(y_{+}) \geq
2\varkappa$, which is possible in view of (\ref{E1}). By
(\ref{L32c}) we have that $c'(y) <0$ for $y>0$; hence,
$[y_{-},+\infty) \ni y\mapsto c(y)/(1 + a u(a, \mu_1 - y))$ is a
decreasing function. Thus, for all $y\in [y_{-}, y_{+}]$, by
(\ref{L32b}) we have that
\begin{equation}
  \label{L35}
  1 - w(y) \geq  2 a \varepsilon,
\end{equation}
with
\[
\varepsilon := - \frac{c(y_{-})}{2 a [1 + a u(a, \mu_0 +y_{+})][1 +
a u(a, \mu_1 -y_{-})]}.
\]
Then by the convergence of $w_V$ and $w'_V$ discussed above, see
(\ref{L33a}),  and (\ref{L35} we conclude that there exists $V_0$
such that, for all $V>V_0$, the following holds: (a) $w_V (y_{-}) -
y_{-} \geq \varkappa$, and $y_{+} - w_V (y_{+}) \geq \varkappa$; (b)
$1 -w_V(y) \geq a \varepsilon$ holding for all $y\in [y_{-},
y_{+}]$. Thereafter, the proof of all the three claims of the lemma
follows in the same way as in the case of $\mu_0 < \mu_c$.
\end{proof}
\begin{remark}
  \label{1rk}
For $\mu_0\geq \mu_1$, we have that $- E''_V(y) >0$ for all $y>
y_{*,V}$ and $V>V_0$. This can be seen from the fact
 that $w'(y) -1$ vanishes just once for $y\geq y_{-}$ and from the
 convergence $w_V' \to w'$.
\end{remark}
Finally, we study the thermodynamic limit for $(a,\mu_0, \mu_1)\in
\mathcal{M}$, where $\mu_0=\mu_1=\mu>\mu_c=1-\ln a$, see (\ref{U5}),
and thus $E_V$ is an even function, see (\ref{21}). The proof of the
next statement follows by the same arguments that were used in the
proof of Lemma \ref{5lm}, case $\mu_0 >\mu_c$.
\begin{lemma}
  \label{6lm}
Assume that $(a,\mu, \mu) \in \mathcal{M}$, and hence $E$ has two
equal non-degenerate maxima at $\pm\bar{y}(a,\mu)$. Then there exist
$V_0>0$, $\varepsilon>0$ and $\upsilon$ such that for all $V>V_0$
the following holds:
\begin{itemize}
  \item[(i)] there exists $y_{*,V}\in [\bar{y}(a,\mu) - \upsilon, \bar{y}(a,\mu) + \upsilon]$ such that $E_V(y_{*,V}) \geq E_V(y)$ for all $y\geq 0$;
\item[(ii)] $- E''_V (y) \geq \varepsilon$ for all $y \in [\bar{y}(a,\mu) - \upsilon, \bar{y}(a,\mu) + \upsilon]$;
\item[(iii)] $y_{*,V} \to \bar{y}(a,\mu)$ as $V\to +\infty$.
\end{itemize}
\end{lemma}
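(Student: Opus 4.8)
The plan is to run the argument of Lemma~\ref{5lm}, case $\mu_0>\mu_c$, on the half-line $[0,+\infty)$. Because $\mu_0=\mu_1=\mu$, the function $E_V$ in (\ref{21}) is even, so it suffices to control it on $[0,+\infty)$, and there the limiting profile $E$ has a strict local minimum at $0$ and, by Proposition~\ref{0lm}, a unique strict local maximum at $\bar{y}(a,\mu)$ -- exactly the $\delta=0$ instance of that case. I keep the notation $w(y)=au(a,\mu+y)-au(a,\mu-y)=aE'(y)+y$ and $w_V$ from (\ref{L32a})--(\ref{L33a}), write $c(y)=a^2u(a,\mu+y)u(a,\mu-y)-1$ as in (\ref{L32b}), and let $c_V$ be its $u_V$-analogue. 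With $\mu_0=\mu_1=\mu$, formula (\ref{L32c}) gives $c'(y)<0$ for $y>0$, so $c$ is strictly decreasing on $(0,+\infty)$; since $\mu>\mu_c$ yields $c(0)>0$ while non-degeneracy of the maximum yields $c(\bar{y}(a,\mu))<0$ (cf. (\ref{L33})), $c$ has a single zero $\hat{y}\in(0,\bar{y}(a,\mu))$, and $w(y)>y$ on $(0,\bar{y}(a,\mu))$, $w(y)<y$ on $(\bar{y}(a,\mu),+\infty)$.

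Next I fix $y_-\in(\hat{y},\bar{y}(a,\mu))$, set $2\varkappa:=w(y_-)-y_-$, and, using (\ref{E1}) as in Lemma~\ref{5lm}, pick $y_+>\bar{y}(a,\mu)$ with $y_+-w(y_+)\ge 2\varkappa$; I also take $y_+$ large enough that the upper bound (\ref{E1}), transferred from $u$ to $u_V$ via (\ref{K16}), forces $c_V(y)<0$ for all $y\ge y_+$ and all $V\ge V_0$. On $[y_-,y_+]$ the monotonicity of $c$ past $\hat{y}$ and (\ref{L32b}) give $1-w'(y)\ge 2a\varepsilon$ for a suitable $\varepsilon>0$; then the uniform convergences $w_V\to w$, $w_V'\to w'$ on compacta -- from (\ref{K16a}) and Corollary~\ref{2co} -- supply $V_0$ so that, for $V>V_0$, one has $w_V(y_-)-y_-\ge\varkappa$, $y_+-w_V(y_+)\ge\varkappa$, and $1-w_V'(y)\ge a\varepsilon$ on $[y_-,y_+]$. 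Hence $E_V$ is strictly concave on $[y_-,y_+]$ with $E_V'(y_-)>0$, $E_V'(y_+)<0$, so it has a unique critical point $y_{*,V}$ there, a maximum, and $-E_V''(y)=(1-w_V'(y))/a\ge\varepsilon$ on $[y_-,y_+]$. Choosing $\upsilon>0$ with $[\bar{y}(a,\mu)-\upsilon,\bar{y}(a,\mu)+\upsilon]\subset(y_-,y_+)$ gives (ii), and the contraction estimate $|y_{*,V}-\bar{y}(a,\mu)|\le|w(\bar{y}(a,\mu))-w_V(\bar{y}(a,\mu))|+|w_V(\bar{y}(a,\mu))-w_V(y_{*,V})|\le a/V+(1-a\varepsilon)|y_{*,V}-\bar{y}(a,\mu)|$, exactly as in (\ref{L34}), gives (iii) and, after enlarging $V_0$, the inclusion $y_{*,V}\in[\bar{y}(a,\mu)-\upsilon,\bar{y}(a,\mu)+\upsilon]$ required in (i).

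The remaining task is to upgrade $y_{*,V}$ from a local to a half-line global maximum, and this is the only delicate point. On $[0,y_-]$ I show $E_V$ is increasing: $w_V(y)-y=\int_0^y(w_V'(s)-1)\,ds$ with $w_V'(s)-1=c_V(s)/D_V(s)$, where $D_V(s)=(1+au_V(a,\mu+s))(1+au_V(a,\mu-s))$ is uniformly bounded on $[0,y_-]$ for $V\ge V_0$ (by (\ref{K16})) and $c_V\to c$ uniformly there; on $[0,\hat{y}/2]$ this gives $w_V(y)-y\ge c_1y$ for large $V$ (since $c$ is positive and decreasing on that subinterval), while on $[\hat{y}/2,y_-]$ the positive quantity $w(y)-y$ is bounded below by a constant and $|w_V-w|\le a/V$ by (\ref{K16a}); hence $E_V'=(w_V-{\rm id})/a>0$ on $(0,y_-]$ and $\max_{[0,y_-]}E_V=E_V(y_-)\le E_V(y_{*,V})$. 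On $[y_+,+\infty)$ the uniform bound $E_V(y)\le f(a,\mu+y+a/2V_0)+f(a,\mu-y+a/2V_0)-y^2/2a$ coming from (\ref{L30}), which tends to $-\infty$ by the computation in (\ref{E2})--(\ref{E3}), pushes $E_V(y)$ below $E(0)$ once $y$ is large, uniformly in $V\ge V_0$; since $E_V(y_{*,V})\to E(\bar{y}(a,\mu))>E(0)$ by Lemma~\ref{4lm}, a further enlargement of $y_+$ and $V_0$ yields $E_V(y)<E_V(y_{*,V})$ for all $y\ge y_+$. Together with the behaviour on $[y_-,y_+]$ -- where $E_V$ increases from $y_-$ to $y_{*,V}$ and decreases afterwards, the endpoint signs of $E_V'$ being fixed by $w_V(y_-)>y_-$, $w_V(y_+)<y_+$ -- this proves (i). The main obstacle is precisely this global-versus-local step: pointwise convergence $E_V\to E$ does not suffice, and one must use the quantitative bounds (\ref{K16a}) and (\ref{L30})--(\ref{L31}) to dominate both the plateau near the local minimum at $0$ and the far field uniformly in $V$.
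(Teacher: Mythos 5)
Your proof is correct and is essentially the paper's own argument: the paper disposes of Lemma \ref{6lm} with the single remark that it follows by the same reasoning as Lemma \ref{5lm}, case $\mu_0>\mu_c$, applied on the half-line where $E_V$ is even, and that is precisely what you carry out (including the global-versus-local upgrade on $[0,y_{-}]$ and $[y_{+},+\infty)$, which the paper leaves implicit there as well). The one cosmetic slip is the claimed identity $w_V'(s)-1=c_V(s)/D_V(s)$, which for finite $V$ holds only up to an $O(1/V)$ correction (see the proof of Corollary \ref{2co}); since you only use it for a uniform positive lower bound on $w_V'-1$ over $[0,\hat{y}/2]$, this follows directly from the uniform convergence $w_V'\to w'$ on compacts and nothing is lost.
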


\subsection{The proof of Theorem \ref{1tm}}

Basically, to complete the proof we have to show that: (a) the
phases are as stated in claims (i) and (ii); (b) the following
holds, cf (\ref{4}), (\ref{9a}) and (\ref{20}),
\begin{equation}
  \label{L36}
 \lim_{V\to +\infty} F_\Lambda (a, \mu_0 , \mu_1) = a u(a, \mu_0 +
 y_*) u(a, \mu_1 -
 y_*) + u(a, \mu_0 +
 y_*) + u(a, \mu_1 -
 y_*).
\end{equation}
The proof of (a) will be done by showing the convergence stated in
(\ref{K7}), which by (\ref{K6}) also amounts to studying the
asymptotic properties of the integrals in (\ref{20}) and (\ref{K8}).
To this end we use Proposition \ref{1pn}, cf (\ref{K11}). First we
consider the case $(a,\mu_0, \mu_1) \in \mathcal{R}$, see Lemma
\ref{5lm}.
\begin{lemma}
  \label{7lm}
Assume that $(a,\mu_0, \mu_1) \in \mathcal{R}$ and let $y_{*,V}$ be
as in Lemma \ref{5lm}. Then in the limit $V\to +\infty$ we have that
\begin{eqnarray}
  \label{KL11}
& & \int_{-\infty}^{+\infty} \phi_V(y) \exp\left( V E_V (y)\right)
d y \\[.2cm] \nonumber  & & \qquad \qquad = \sqrt{\frac{2\pi}{V|E''_V (y_{*,V})|}} \phi_V (y_{*,V})
\exp\left( V E_V (y_{*,V})\right)\left[1 + o(1) \right].
\end{eqnarray}
\end{lemma}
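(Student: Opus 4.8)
The plan is to derive (\ref{KL11}) by combining the localized Laplace asymptotic from Proposition \ref{1pn} with a tail estimate showing that the contribution to the integral from outside the window $U_V = [y_{*,V}-\Delta_V, y_{*,V}+\Delta_V]$ is negligible. First I would verify that the hypotheses of Proposition \ref{1pn} are met. By Lemma \ref{5lm}, for $(a,\mu_0,\mu_1)\in\mathcal{R}$ the function $E_V$ has, for all $V>V_0$, a unique global maximum at $y_{*,V}\in[y_-,y_+]$, and $-E''_V(y)\ge\varepsilon$ on that interval; since $y_{*,V}\to y_*$ and $E''_V\to E''$ uniformly on compacts (Lemma \ref{4lm}), we have $|E''_V(y_{*,V})|$ bounded above and below by positive constants for large $V$. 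Hence the natural choice is $\alpha_V = V^{1/6}$ (any $\alpha_V\to\infty$ with $\alpha_V/\sqrt{V}\to0$ works), giving $\Delta_V \asymp V^{-1/2}\alpha_V \to 0$ while $\sqrt{V}\,\Delta_V\to\infty$, so (\ref{K10}) holds. Also $\phi_V$ equal to a constant or to $u_V(a,\mu_0+y)$ or $u_V(a,\mu_1-y)$ is exactly the admissible class, and these are continuous with $\phi_V(y_{*,V})$ converging to the corresponding $u(a,\cdot)$ value. Proposition \ref{1pn} then yields (\ref{K11}), i.e. the asserted asymptotic but with the integral restricted to $U_V$.

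The remaining work is the tail bound: I must show
\[
\int_{\mathds{R}\setminus U_V} \phi_V(y)\,\exp\!\left(V E_V(y)\right)\,dy
= o\!\left( \frac{1}{\sqrt{V}}\,\phi_V(y_{*,V})\,\exp\!\left(V E_V(y_{*,V})\right)\right).
\]
I would split $\mathds{R}\setminus U_V$ into the ``near'' part $[y_-,y_+]\setminus U_V$ and the ``far'' part $\mathds{R}\setminus[y_-,y_+]$. On the near part, a second-order Taylor expansion of $E_V$ around $y_{*,V}$ together with $-E''_V\ge\varepsilon$ (Lemma \ref{5lm}(ii)) gives $E_V(y)\le E_V(y_{*,V}) - \tfrac{\varepsilon}{2}(y-y_{*,V})^2 \le E_V(y_{*,V}) - \tfrac{\varepsilon}{2}\Delta_V^2$ for $|y-y_{*,V}|\ge\Delta_V$; since $V\Delta_V^2 = \alpha_V^2/|E''_V(y_{*,V})| \to\infty$, the factor $\exp(-\tfrac{\varepsilon V}{2}\Delta_V^2)$ beats any polynomial power of $V$, and $\phi_V$ is bounded on $[y_-,y_+]$ by Corollary \ref{1co}. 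On the far part, I would reuse the coercivity estimate from the proof of Proposition \ref{0lm}: by (\ref{E2})--(\ref{E3}) and the analogous bound for $y\to-\infty$, $E(y)\le -c_1|y| + c_2$ for $|y|$ large, and by (\ref{L31}) (with $V\ge V_0$) $E_V(y)\le E(y) + \tfrac{a}{2V}u(a,y+\tfrac{a}{2V_0})$, so after absorbing the small correction $E_V(y) \le -c_1'|y| + c_2'$ eventually; meanwhile $E_V(y_{*,V})$ is bounded below (it exceeds $E_V(0)$ which converges to $E(0)>0$ by (\ref{E1a})). The integrand $\phi_V(y)\exp(V E_V(y))$ is then dominated by $u(a,y+\tfrac{a}{2V_0})\exp(-c_1' V|y| + c_2' V)$ on $\mathds{R}\setminus[y_-,y_+]$ up to the finite leftover piece, which integrates to something exponentially smaller than $\exp(V E_V(y_{*,V}))$; one has to check the linear growth of $u$ in $x$ (from (\ref{E1})) does not spoil the Gaussian/exponential decay, which it does not.

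I expect the main obstacle to be the far-field tail control, specifically handling the $V$-dependence that enters both through the exponent $V E_V(y)$ and through the integrand $\phi_V = u_V(a,\cdot)$ which itself depends on $V$ and is unbounded as $|y|\to\infty$. The key is that $u_V(a,x)\le u(a,x+\tfrac{a}{2V})\le u(a,x+\tfrac{a}{2V_0})$ for $V\ge V_0$ (Corollary \ref{1co}), a $V$-uniform bound, combined with the $V$-uniform linear upper bound on $u(a,x)$ from (\ref{E1}); once these are in hand, the exponential decay $\exp(-c_1'V|y|)$ dominates for $|y|$ large, and for $|y|$ in a bounded range outside $[y_-,y_+]$ one uses that $\sup_{|y|\le R,\, y\notin[y_-,y_+]} E_V(y)$ converges to $\sup_{|y|\le R,\, y\notin[y_-,y_+]} E(y) < E(y_*)$ (by uniform convergence $E_V\to E$ on compacts and uniqueness of the global maximum of $E$), so that piece is exponentially suppressed relative to $\exp(V E_V(y_{*,V}))$. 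Adding the localized contribution from Proposition \ref{1pn} and the negligible tail then gives (\ref{KL11}) for each admissible $\phi_V$, completing the proof.
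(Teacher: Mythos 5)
Your proposal is correct and follows essentially the same route as the paper: apply Proposition \ref{1pn} on the shrinking window $U_V$ and kill the complement by splitting it into a near zone (controlled by $-E''_V\ge\varepsilon$ on $[y_-,y_+]$ from Lemma \ref{5lm}), a bounded intermediate zone (exponentially suppressed by the gap between $\sup E_V$ there and $E_V(y_{*,V})$), and a far zone (controlled by the linear coercivity bound (\ref{E2})--(\ref{E3}) together with the $V$-uniform bounds on $\phi_V$ from Corollary \ref{1co} and (\ref{E1})). The only cosmetic differences are your choice $\alpha_V=V^{1/6}$ versus the paper's $V^{1/4}$, and your quadratic Taylor bound on the near zone where the paper instead changes variables via the convex increasing function $-VE_V$.
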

\begin{proof}
Let $V_0$ and $\varepsilon$ be as in Lemma \ref{5lm}. Then $-
E''(y_{*,V}) \geq \varepsilon$ and hence $\Delta_V$ defined in
(\ref{K10}) with $\alpha_V =V^{1/4}$ tends to zero. Let $I_V$ stand
for the left-hand side of (\ref{K11}) with such $\Delta_V$. Let also
$I^{-}_V$ and $I_V^{+}$ stand for the integrals over $(-\infty,
y_V^{-}]$ and $[y_V^{+}, +\infty)$, $y_V^{\pm}:= y_{*,V} \pm
\Delta_V$, respectively, so that ${\rm LHS}(\ref{KL11})= I_V +
I_V^{+} + I_V^{-}$. In view of Lemma \ref{5lm}, the proof of
(\ref{KL11}) will be done by showing that
\begin{equation}
  \label{L36w}
I^{\pm}_V \exp\left(-V E_V (y_{*,V}) \right) \to 0, \qquad {\rm as}
\ \ V \to +\infty.
\end{equation}
As above, we set $\mu_0 \geq \mu_1$, and hence $y_{*,V} \geq 0$. Let
$y_{\pm}$ be in Lemma \ref{5lm}. Since $\Delta_V \to 0$, we have
that $y_{+}> y_V^{+}= y_{*,V} +\Delta_V$ and $y_{-}< y_V^{-}=
y_{*,V} -\Delta_V$, holding  for big enough $V$. By (\ref{E1}) and
(\ref{E3}) both $A_1$ and $A_2$ in the estimate in (\ref{E2}) are
increasing functions of $y$. Let $b_{+}> y_{+}$ (resp. $b_{-}<
\min\{0;y_{-}\}$) and positive $C^{+}_0, C_1^{+}$,
$C^{+}_0<C_1^{+}b_{+}$ (resp. $C_0^{-}, C_1^{-}$, $C_0^{-}< -
C_1^{-}b_{-}$) be such that the following version of (\ref{E2})
holds
\begin{equation}
  \label{L36a}
  E(y) < \left\{ \begin{array}{ll}  C_0^{+} - C_1^{+} y, \qquad &{\rm for} \ \ y\geq
b_{+};\\[.2cm]  C_0^{-} + C_1^{-} y, \qquad &{\rm for} \ \ y\leq
b_{-}. \end{array} \right.
\end{equation}
By (\ref{L30}) we have that $E_V$ also satisfies (\ref{L36a}) for
all $V>V_0$. Since $y_{*,V}$ is neither in $[y_{+}, b_{+}]$ nor in
$[ b_{-}, y_{-}]$, there exists $\epsilon>0$ such that
\begin{equation}
  \label{L36q}
  E_V (y_{*,V}) - \sup_{y\in [ b_{-}, y_{-}] }
  E_V(y) \geq \epsilon, \quad  E_V (y_{*,V}) - \sup_{y\in [y_{+}, b_{+}]}
  E_V(y) \geq \epsilon.
\end{equation}
Let $b_{\pm}$ be as just described. For all assumed choices of
$\phi_V$, one can pick positive $c_0$, $c_1$ and $c_2$ such that:
\begin{equation}
  \label{L37q}
  \phi_V (y) \leq \left\{ \begin{array}{ll}
c_0 + c_1 y, \quad &{\rm for} \ \ y\geq b_{+},\\[.2cm]
c_2, \quad &{\rm for} \ \ y\in [b_{-},b_{+}]\\[.2cm]
c_0 - c_1 y, \quad &{\rm for} \ \ y\leq b_{-}.
  \end{array}\right.
\end{equation}
Set
\begin{eqnarray}
  \label{L37}
I^{+}_V & = & I^{+,0}_V + I^{+,1}_V + I^{+,2}_V\\[.2cm] \nonumber & := &
\int_{y_V^{+}}^{y_{+}}\phi_V(y) \exp\left( V E_V (y)\right) d y +
\int_{y_{+}}^{b_{+}}\phi_V(y) \exp\left( V E_V (y)\right) d y \\[.2cm] \nonumber &
+ & \int_{b_{+}}^{+\infty}\phi_V(y) \exp\left( V E_V (y)\right) d y.
\end{eqnarray}
By (\ref{L36a}) and (\ref{L37q}) we obtain
\begin{gather}
  \label{L38}
I^{+,2}_V \leq \exp\left( V C_0^{+}\right) \int_{b_{+}}^{\infty}
(c_0 +
c_1 y) \exp\left( - V C_1^{+} y \right) dy \\[.2cm] \nonumber =
\frac{1}{C_1^{+} V} \exp\left( V[C_0^{+} - C_1^{+} b_{+}]\right)
\left(c_0 + c_1
b_{+} + \frac{c_1}{C_1^{+} V} \right)\\[.2cm] \nonumber \leq \frac{1}{C_1^{+} V} \exp\left( V E_V (y_{*,V}) \right) \left(c_0 +
c_1 b_{+} + \frac{c_1}{C_1^{+} V} \right).
\end{gather}
Here we have taken into account that $C_0^{+} - C_1^{+} b_{+} \leq
0$ and $E_V (y_{*,V})>0$, see (\ref{E1a}) and (\ref{L30}).

To estimate $I^{+,1}_V $ we set $\epsilon_V^{+} = \sup_{y\in [y_{+},
b_{+}]} E_V (y)$ and use the corresponding estimate from
(\ref{L37q}). By (\ref{L36q}) this yields
\begin{gather}
  \label{L38z}
I^{+,1}_V \leq  c_2 e^{V\epsilon_V^{+}} (b_{+} - y_{+}) \leq c_2
(b_{+} - y_{+}) \exp\left( - V \epsilon + VE_V (y_{*,V})\right).
\end{gather}
To estimate $I^{+,0}_V $ we use the fact that $- E''_V(y) \geq
\varepsilon$ for all $y\in [y^{+}_V, y_{+}]$. That is, $h_V(y):= - V
E_V$ is convex and increasing on this interval. Set $\tau = h_V(y)$, and hence $dy = d \tau/ h'_V (y)$. Then, cf
(\ref{L37q}),
\begin{gather}
  \label{L39}
I^{+,0}_V = \int_{h_V(y_V^{+})}^{h(y_{+})} \frac{\phi_V (y)}{h'_V
(y)} e^{-\tau} dt  \leq \frac{c_2}{V\varepsilon}
\int_{h_V(y_V^{+})}^{h_V(y_{+})} e^{-\tau} d \tau \\[.2cm] \nonumber \leq
\frac{c_2}{V\varepsilon} \exp\left( V E_V (y_V^{+})\right) \leq
\frac{c_2}{V\varepsilon} \exp\left( V E_V (y_{*,V})\right).
\end{gather}
Now we use (\ref{L38}), (\ref{L38z}) and (\ref{L39}) in (\ref{L37})
and obtain that (\ref{L36}) holds true for $I^{+}_V$. Write
\begin{eqnarray*}
I^{-}_V & = & I^{-,0}_V + I^{-,1}_V + I^{-,2}_V\\[.2cm] \nonumber & := &
\int_{y_{-}}^{y_V^{-}}\phi_V(y) \exp\left( V E_V (y)\right) d y +
\int_{b_{-}}^{y_{-}}\phi_V(y) \exp\left( V E_V (y)\right) d y \\[.2cm] \nonumber &
+ & \int_{-\infty}^{b_{-}}\phi_V(y) \exp\left( V E_V (y)\right) d y,
\end{eqnarray*}
where $y_{-}$ is the same as in Lemma \ref{5lm} and $b_{-}$ as in
(\ref{L36a}). Then we proceed exactly as in (\ref{L38}),
(\ref{L38z}) and (\ref{L39}) to show that (\ref{L36w}) holds true
also for $I^{-}_V$.
\end{proof}
Now we consider the case where $(a,\mu, \mu)\in \mathcal{M}$ and
thus $E_V$ is an even function, cf (\ref{21}). In particular,
$E_V(y_{*,V}) = E_V(-y_{*,V})$.
\begin{lemma}
  \label{8lm}
Assume that $(a,\mu, \mu)\in \mathcal{M}$ and let  $y_{*,V}$ be as
in Lemma \ref{6lm}. Then in the limit $V\to +\infty$ we have that
\begin{eqnarray}
  \label{KL12}
& & \int_{-\infty}^{+\infty} \phi_V(y) \exp\left( V E_V (y)\right)
d y \\[.2cm] \nonumber  & & \qquad \qquad = \sqrt{\frac{2\pi}{V|E''_V (y_{*,V})|}} \left[ \phi_V
(-y_{*,V})+ \phi_V (y_{*,V})\right] \exp\left( V E_V
(y_{*,V})\right)\left[1 + o(1) \right].
\end{eqnarray}
\end{lemma}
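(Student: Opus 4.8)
The plan is to reduce the statement to Lemma~\ref{7lm} by exploiting the evenness of $E_V$ and the fact that $y_{*,V}$ converges to $\bar y(a,\mu)>0$. Since $E_V(-y)=E_V(y)$, the substitution $y\mapsto -y$ in the part of the integral over $(-\infty,0]$ gives
\begin{equation}
  \label{L8a}
  \int_{-\infty}^{+\infty}\phi_V(y)\,e^{VE_V(y)}\,dy = \int_0^{+\infty}\phi_V(y)\,e^{VE_V(y)}\,dy + \int_0^{+\infty}\phi_V(-y)\,e^{VE_V(y)}\,dy .
\end{equation}
Here $\mu_0=\mu_1=\mu$, so each admissible $\phi_V$ is a constant or one of $u_V(a,\mu+y)$, $u_V(a,\mu-y)$; hence $\phi_V(-\,\cdot\,)$ is again admissible. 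It therefore suffices to prove that, for every admissible $\psi_V$,
\begin{equation}
  \label{L8b}
  \int_0^{+\infty}\psi_V(y)\,e^{VE_V(y)}\,dy = \sqrt{\frac{2\pi}{V|E''_V(y_{*,V})|}}\,\psi_V(y_{*,V})\,e^{VE_V(y_{*,V})}\bigl[1+o(1)\bigr],
\end{equation}
and then apply this to $\psi_V=\phi_V$ and to $\psi_V=\phi_V(-\,\cdot\,)$ and add; since both resulting expressions carry the same prefactor $\sqrt{2\pi/(V|E''_V(y_{*,V})|)}\,e^{VE_V(y_{*,V})}$, this yields (\ref{KL12}).

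To establish (\ref{L8b}) I would repeat the argument of Lemma~\ref{7lm} almost verbatim, now with the left endpoint $0$ in the role that $-\infty$ played there. By Lemma~\ref{6lm} one has $y_{*,V}\to\bar y(a,\mu)>0$ and $-E''_V\geq\varepsilon$ on a fixed neighbourhood of $\bar y(a,\mu)$ not containing $0$; hence for large $V$ the interval $U_V=[y_{*,V}-\Delta_V,y_{*,V}+\Delta_V]$ with $\Delta_V=V^{1/4}/\sqrt{V|E''_V(y_{*,V})|}$ lies in $(0,+\infty)$, and Proposition~\ref{1pn} produces the claimed leading term from $\int_{U_V}$. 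The remaining two pieces of $\int_0^{+\infty}$, namely $\int_0^{y_{*,V}-\Delta_V}$ and $\int_{y_{*,V}+\Delta_V}^{+\infty}$, must be shown to be $o\bigl(e^{VE_V(y_{*,V})}\bigr)$. For the tail $\int_{y_{*,V}+\Delta_V}^{+\infty}$ this is exactly the bound on $I^{+}_V$ in Lemma~\ref{7lm}, using the linear decay (\ref{E2}) of $E$ transferred to $E_V$ via (\ref{L30}), a gap estimate of the type (\ref{L36q}), and $E_V(y_{*,V})>0$. For $\int_0^{y_{*,V}-\Delta_V}$ I would pick, as in the proof of Lemma~\ref{6lm} (which mirrors Lemma~\ref{5lm}, case $\mu_0>\mu_c$), a fixed $y_{-}\in(0,\bar y(a,\mu))$ with $1-w_V(y)\geq a\varepsilon$ for all $y\in[y_{-},b_{+}]$ and large $V$; on $[y_{-},y_{*,V}-\Delta_V]$ the function $-VE_V$ is then convex and the estimate on $I^{+,0}_V$ from Lemma~\ref{7lm} applies, while on the compact $[0,y_{-}]$ one uses the uniform gap
\[
  E_V(y_{*,V})-\sup_{y\in[0,y_{-}]}E_V(y)\ \geq\ \epsilon\ >\ 0,
\]
which holds because $\sup_{[0,y_{-}]}E=E(y_{-})<E(\bar y(a,\mu))$ (as $E$ is strictly increasing on $(0,\bar y(a,\mu))$, having no critical point there) together with $E_V\to E$ uniformly on compacts and $y_{*,V}\to\bar y(a,\mu)$. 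Combined with the bound $\psi_V(y)\leq c_0+c_1y$ from (\ref{L37q}), these estimates give (\ref{L8b}).

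The only place where the argument genuinely departs from Lemma~\ref{7lm} — and hence the main point to get right — is the treatment of the left endpoint $y=0$: unlike a neighbourhood of $-\infty$, a neighbourhood of $0$ cannot be absorbed into a linear-decay region, so one must instead verify that $y=0$ contributes no competing Laplace term. This is precisely what the displayed uniform gap ensures: $0$ is the non-degenerate local minimum of $E$ lying between $-\bar y(a,\mu)$ and $\bar y(a,\mu)$, so $E(0)<E(\bar y(a,\mu))$ strictly, and any fixed neighbourhood of $0$ therefore contributes a term exponentially smaller than the leading one; after splitting off such a neighbourhood, the strip $[y_{-},y_{*,V}-\Delta_V]$ adjacent to $U_V$ is disposed of by the same convexity argument as in Lemma~\ref{7lm}. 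With (\ref{L8b}) in hand, summing the two halves in (\ref{L8a}) gives the assertion.
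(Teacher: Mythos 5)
Your proposal is correct and follows essentially the same route as the paper: the paper likewise symmetrizes (it works with $\varphi_V(y)=\phi_V(-y)+\phi_V(y)$ and a single half-line integral split into six pieces, rather than two half-line integrals added at the end) and then runs the Lemma~\ref{7lm} machinery on $[0,+\infty)$, disposing of the neighbourhood of $0$ by exactly the uniform gap $E_V(y_{*,V})-\sup_{[0,y_-]}E_V\geq\epsilon$ you identify. The only difference is cosmetic.
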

\begin{proof}
Set
\begin{eqnarray}
  \label{L41q}
  \varphi_V (y) & = & \phi_V (-y) + \phi_V (y),\\[.2cm]
I_V & = & \int_{0}^{+\infty} \varphi_V(y) \exp\left( V E_V
(y)\right) d y. \nonumber
\end{eqnarray}
Thus, we have to show that
\begin{equation}
  \label{L42}
I_V = \sqrt{\frac{2\pi}{V|E''_V (y_{*,V})|}}  \varphi_V ( y_{*,V})
\exp\left( V E_V (y_{*,V})\right)\left[1 + o(1) \right].
\end{equation}
Let $V_0$, $\varepsilon$ and $\upsilon$ be as in Lemma \ref{6lm} and
$\Delta_V= V^{-1/4}/\sqrt{ \varepsilon}$, cf (\ref{K10}). Set
$y^{\pm}_V = y_{*,V} \pm \Delta_V$, $y_{\pm} = \bar{y}(a,
 \mu) \pm \upsilon$ and assume that $V$ is big enough so that $y_{-}
 < y^{-}_V$ and $y_{+}
 < y^{+}_V$.  Let $C_0^{+}$, $C_1^{+}$ and $b_{+}$ be such
that the first line in (\ref{L36a}) holds true. We also assume that
both estimates in (\ref{L36q}) hold where $b_{-}$ is set to be zero.
Finally, by (\ref{L37q}) we have that
\begin{equation}
  \label{L43}
\varphi_V (y) \leq \left\{ \begin{array}{ll} c_2, \quad &{\rm for}
\ \ y\in [0, b_{+}]; \\[.2cm] c_0 + c_1 y , \quad &{\rm for}
\ \ y > b_{+},
\end{array}\right.
\end{equation}
holding for all $V>V_0$. Then we split
 $I_V$ into six summands, i.e., write $I_V=\sum_{j=1}^6
 I_{j,V}$. In estimating these summands we mainly follow the way
 elaborated in proving Lemma \ref{7lm}. Namely, cf (\ref{L38z}),
 \begin{gather}
\label{L44}
 I_{1,V} := \int_0^{y_{-}} \varphi_V(y) \exp\left( V E_V (y)\right) d
y \leq c_2 y_{-} \exp\left( - V \epsilon + VE_V (y_{*,V})\right).
\end{gather}
Next, set $\tau = h_V(y):= - V E_V (y)$, cf (\ref{L39}),
\begin{gather}
  \label{L45}
 I_{2,V} := \int_{y_{-}}^{y^{-}_V} \varphi_V(y) \exp\left( V E_V (y)\right) d
y  = \int_{h_V(y_{-})}^{h_V(y_V^{-})} \frac{\varphi_V (y)}{h_V'(y)}
e^{-\tau} d \tau \\[.2cm] \nonumber \leq \frac{c_2}{V
\varepsilon} \int_{h_V(y_{-})}^{h_V(y_V^{-})}  e^{-\tau} d \tau \leq
\frac{c_2}{V \varepsilon} \exp\left(VE_V (y_V^{-}) \right) \leq
\frac{c_2}{V \varepsilon} \exp\left(VE_V (y_{*,V}) \right).
\end{gather}
The next integral is estimated by means of Proposition \ref{1pn}.
That is,
\begin{eqnarray}
  \label{L46}
 I_{3,V} & := & \int_{y_V^{-}}^{y_V^{+}} \varphi_V(y) \exp\left( V E_V (y)\right) d
y \\[.2cm] \nonumber & = & \sqrt{\frac{2\pi}{V|E''_V (y_{*,V})|}}  \varphi_V (y_{*,V})
\exp\left( V E_V (y_{*,V})\right)\left[1 + o(1) \right].
\end{eqnarray}
The next one is estimated pretty similar to (\ref{L45})
\begin{eqnarray}
  \label{L47}
 I_{4,V} & := & \int_{y_V^{+}}^{y_{+}} \varphi_V(y) \exp\left( V E_V (y)\right) d
y \\[.2cm] \nonumber &\leq & \frac{c_2}{V \varepsilon} \exp\left(VE_V (y_V^{+})
\right) \leq \frac{c_2}{V \varepsilon} \exp\left(VE_V (y_{*,V})
\right).
\end{eqnarray}
The next integral in turn is estimated similarly  as in (\ref{L44})
 \begin{gather}
\label{L44a}
 I_{5,V} := \int_{y_{+}}^{b_{+}} \varphi_V(y) \exp\left( V E_V (y)\right) d
y \leq c_2 (b_{+}- y_{+}) \exp\left( - V \epsilon + VE_V
(y_{*,V})\right).
\end{gather}
Finally, cf (\ref{L38}) and (\ref{L43}),
\begin{eqnarray}
  \label{L48}
I_{6,V} & := & \int_{g_{+}}^{+\infty} \varphi_V(y) \exp\left( V E_V
(y)\right) d y \\[.2cm] \nonumber &\leq &\exp\left(V C_0^{+}\right)
\int_{g_{+}}^{+\infty} \left(c_0 + c_1 y \right) \exp\left(
- V C_1^{+} y\right) dy \\[.2cm] \nonumber & \leq & \frac{1}{ C_1^{+}V} \exp\left(VE_V
(y_{*,V})\right) \left[ c_0 + c_1 b_{+} +
\frac{c_1}{C_1^{+}V}\right].
\end{eqnarray}
Now by (\ref{L44}), (\ref{L45}), (\ref{L46}), (\ref{L47}),
(\ref{L44a}) and (\ref{L48}) we conclude that (\ref{L42}) holds
true.
\end{proof} \vskip.1cm \noindent {\it Proof of Theorem \ref{1tm}.}
First we consider the case $(a,\mu_0, \mu_1)\in \mathcal{R}$. Apply
Lemma \ref{7lm} in (\ref{K8}) with $\phi_V (y) = u_V(a, \mu_0+y)$ in
the numerator and $\phi_V (y) \equiv 1$ in the denominator. This
yields
\begin{equation}
  \label{L49}
F^{(0}_\Lambda (a, \mu_0 , \mu_1) = u_V (a, \mu_0 + y_{*,V})\left[ 1
+ o(1)\right].
\end{equation}
On the other hand, by (\ref{K16a}) and (\ref{L34}) we obtain
\[
u_V (a, \mu_0 + y_{*,V}) = u (a , \mu_0 +y_{*}) + o(1)
\]
Since $y_{*}$ is a continuously differentiable function of $\mu_0$
and $\mu_1$, cf (\ref{E8}), we have that
\[
F^{(0}_\Lambda \left(a, \mu_0 - \frac{a}{V}n_1 t,   \mu_1 -
\frac{a}{V}n_0\right) = u (a , \mu_0 +y_{*}) + o(1),
\]
uniformly in $t\in[0,1]$. We use this in (\ref{K6}) and obtain that
the second line in (\ref{K7}) holds true. The proof of the first
line follows analogously. This proves claim (i) of the theorem. Let
us now turn to the case $(a,\mu, \mu)\in \mathcal{M}$. By Lemma
\ref{8lm} we obtain, cf (\ref{L49}) and (\ref{L41q}),
\begin{equation}
  \label{L50}
F^{(0}_\Lambda (a, \mu_0 , \mu_1) = \frac{1}{2}\left[ u_V (a, \mu-
y_{*,V}) + u_V (a, \mu + y_{*,V}) \right](1+o(1)).
\end{equation}
On the other hand, by (\ref{U2}) and then by (\ref{K4}) it follows
that
\begin{eqnarray*}
F^{(0}_\Lambda (a, \mu_0 , \mu_1) & = & \varrho_{0,\Lambda}:=
\frac{1}{\Xi_\Lambda (a,
\mu_0,\mu_1)}\\[.2cm] \nonumber &\times &\sum_{n_0, n_1=0}^\infty \left(\frac{n_0}{V}\right)
 \frac{V^{n_0 + n_1}}{n_0! n_1 !}  \exp\left(
\mu_0 n_0 + \mu_1 n_1 - \frac{a}{V}n_0 n_1\right).
\end{eqnarray*}
That is, $F^{(0}_\Lambda (a, \mu_0 , \mu_1)$ is the density of the
particles of type 0 in the local state corresponding to the
interaction energy (\ref{U1}) (determined by $a$) and chemical
potentials $\mu_0$ and $\mu_1$. By (\ref{L50}) we have
\begin{equation}
  \label{L52}
  \lim_{\Lambda \to \mathds{R}^d} \varrho_{0,\Lambda} = \frac{1}{2}\left[ u (a, \mu-
\bar{y} (a,\mu)) + u (a, \mu + \bar{y} (a,\mu)) \right].
\end{equation}
Likewise,
\[
\lim_{\Lambda \to \mathds{R}^d} \varrho_{1,\Lambda} = {\rm
RHS}(\ref{L52}),
\]
and
\begin{eqnarray*}
\lim_{\Lambda \to \mathds{R}^d} F^{(01)}_\Lambda (a, \mu_0 , \mu_1)
& = & \frac{1}{2}\bigg{(} u (a, \mu- \bar{y} (a,\mu)) u (a, \mu+\bar{y} (a,\mu))  \\[.2cm] \nonumber & + &
 u (a, \mu + \bar{y} (a,\mu))u (a, \mu - \bar{y}
(a,\mu)) \bigg{)},
\end{eqnarray*}
where
\begin{eqnarray*}
F^{(01)}_\Lambda (a, \mu_0 , \mu_1) & = & \frac{1}{\Xi_\Lambda (a,
\mu_0,\mu_1)}\\[.2cm] \nonumber &\times &\sum_{n_0, n_1=0}^\infty
\left(\frac{n_0}{V}\right)\cdot \left(\frac{n_1}{V}\right)
 \frac{V^{n_0 + n_1}}{n_0! n_1 !}  \exp\left(
\mu_0 n_0 + \mu_1 n_1 - \frac{a}{V}n_0 n_1\right).
\end{eqnarray*}
That is, the limiting state in this case is the symmetric mixture
(convex combination with equal coefficients) of two pure states
(phases, see \cite[Chapter 7]{GeM}), say $P^{\pm}$. The particle
densities $\varrho_i^{\pm}$ in these phases are
\begin{equation*}
\varrho_0^{\pm} = u(a, \mu \pm \bar{y}(a,\mu)), \qquad
\varrho_1^{\pm} = u(a, \mu \mp \bar{y}(a,\mu)).
\end{equation*}
For these phases, like in the case of $\mu_0\neq \mu_1$ we get, see
(\ref{K6}) and (\ref{K7}), that
\begin{eqnarray*}
{\rm for} \ P^{+} \ \ & &\tilde{\mu}_0^\Lambda \to \mu-  u(a, \mu -
\bar{y}(a,\mu))= \ln \tilde{z}^{+}, \quad \ \tilde{\mu}_1^\Lambda
\to \mu- u(a, \mu + \bar{y}(a,\mu)) = \ln \tilde{z}^{-},
\\[.2cm]
{\rm for} \ P^{-} \ \ & &\tilde{\mu}_0^\Lambda \to \mu- u(a, \mu +
\bar{y}(a,\mu))= \ln \tilde{z}^{-}, \quad \ \tilde{\mu}_1^\Lambda
\to \mu- u(a, \mu -\bar{y}(a,\mu))= \ln \tilde{z}^{+}.
\end{eqnarray*}
By (\ref{K5}) and (\ref{K2}) this yields that $P^{+} =
P_{\tilde{z}^{+}, \tilde{z}^{-}}$ and $P^{-} = P_{\tilde{z}^{-},
\tilde{z}^{+}}$, which proves claim (ii). To prove claim (iii) we
use the first line in (\ref{4}) and then (\ref{KL11}) (resp.
(\ref{KL12})) with $\phi_V \equiv 1$ for $\mu_0 > \mu_1$ (resp.
$\mu_0 = \mu_1$). In both cases,  by (\ref{E1a}) this leads to
(\ref{9a}). \hfill$\square$

\section*{Acknowledgment}
The present research was supported by the European Commission under
the project STREVCOMS PIRSES-2013-612669. The first named author was
also supported by National Science Centre, Poland, grant
2017/25/B/ST1/00051.

\end{document}